\documentclass[11pt]{article}

\usepackage{amsmath,amssymb,amsthm,mathrsfs, ulem}
\usepackage{enumerate, hyperref}
\usepackage{xcolor}

\newcommand{\E}{\mathbb E}

\newtheorem{theorem}{Theorem}
\newtheorem{example}{Example}
\newtheorem{lemma}{Lemma}

\newtheorem{proposition}{Proposition}
\newtheorem{assumption}{Assumption}
 \newtheorem{remark}{Remark}
\newcommand{\tr}{\textcolor{black}}

\title{ High-Order Expansions of the Optimizer Map via Bell Polynomials}

\author{
Oleksii Mostovyi\thanks{%
University of Connecticut, Department of Mathematics,
Storrs, CT 06269, United States.
Email: \texttt{oleksii.mostovyi@uconn.edu}.
This research has been partially supported by the National Science Foundation
under grant No.~DMS-1848339.}
\and
Thaleia Zariphopoulou\thanks{%
Departments of Mathematics and IROM, McCombs School of Business,
The University of Texas at Austin, and the Oxford-Man Institute of
Quantitative Finance, University of Oxford.
Email: \texttt{zariphop@math.utexas.edu}.}
}

\date{}
\begin{document}

\maketitle
\begin{abstract}
 Completely monotonic inverse marginal (CMIM) utilities, introduced in \cite{MSZ}, constitute a tractable class of preferences that includes many of the most important utility functions used in mathematical finance, such as power and exponential utilities. In stochastically dominant markets, their Bernstein representation induces a hidden linear structure in the dual optimization problem that serves as the foundation for the present analysis.

In this paper, we investigate the sensitivity of optimal investment with respect to perturbations of investor preferences within the CMIM class. Exploiting Bernstein's representation theorem, we show that, under stochastic dominance, affine perturbations of Bernstein measures induce an affine representation of the dual value function. As a result, the dependence of the optimization problem on preferences can be analyzed through a scalar budget equation, allowing us to prove analyticity of the associated Lagrange multiplier with respect to the perturbation parameter and to derive convergent analytic expansions of arbitrary order for the primal value function and the optimal terminal wealth, with explicit recursive formulas expressed through Bell polynomials.


\end{abstract}
\noindent\textbf{Keywords:}
 Utility maximization; portfolio optimization; completely monotonic inverse marginal utilities;
Bernstein representation; asymptotic expansions; Bell polynomials; stochastic dominance.
%
%
%
%
\section{Introduction}


The choice of utility function plays a fundamental role in continuous-time portfolio selection. In general semimartingale models, however, the dependence of optimal investment strategies and value functions on investor preferences is highly nonlinear. This raises the question of identifying classes of utility functions that are sufficiently rich for applications while retaining enough structure to permit sensitivity analysis beyond the first order.

One such class is that of completely monotonic inverse marginal (CMIM) utilities, introduced in \cite{MSZ}. This class contains many of the most important utility functions used in mathematical finance, including logarithmic, power, and exponential utilities, and enjoys a representation through Bernstein measures. In stochastically dominant markets, this representation leads to a hidden linear structure of the dual optimization problem and, consequently, to a degree of tractability that appears to be unavailable for general utility functions.

The principal insight of \cite{MSZ} is that, in stochastically dominant markets, the Bernstein representation of CMIM utilities induces a linear structure in the dual problem. Whereas \cite{MSZ} exploited this structure to study analyticity with respect to initial wealth, the present paper investigates perturbations of investor preferences induced by affine perturbations of Bernstein measures. 
Combining stochastic dominance with the Bernstein representation of CMIM utilities, we establish an affine representation of the dual value function and exploit this structure to derive convergent recursive expansions of arbitrary order for the Lagrange multiplier, the primal value function, and the optimal terminal wealth. By contrast, even obtaining first- and second-order expansions in general incomplete markets is highly nontrivial; see \cite{KS06, MostovyiSirbuPertSem}.

%
%
%

The appearance of Bell polynomials reflects an intrinsic compositional structure of the optimization problem. The Lagrange multiplier is determined implicitly through the budget equation, while the optimizer is obtained by evaluating the inverse marginal utility at the optimally scaled distinguished dual element. As a result, higher-order sensitivity coefficients arise from repeated differentiation of compositions, leading naturally to the Fa\`a di Bruno formula and Bell polynomials. To the best of our knowledge, Bell polynomials have not previously been used systematically in portfolio optimization.

Several questions remain open. These include approximation of general preferences by CMIM utilities, statistical elicitation and calibration of Bernstein measures, and higher-order expansions under simultaneous perturbations of probabilities and preferences. We leave these topics for future work.

The remainder of the paper is organized as follows. Section \ref{secModel} introduces the utility maximization framework, the class of CMIM utilities, and affine perturbations of Bernstein measures. Section \ref{secMain} establishes the affine structure of the dual value function and develops convergent analytic expansions of arbitrary order for the Lagrange multiplier, the primal value function, and the optimal terminal wealth. Section \ref{secCRRAex} illustrates the theory in the case of affine perturbations of CRRA inverse marginal utilities and derives explicit formulas in terms of moments of the distinguished dual optimizer. Section \ref{exBS} studies simultaneous perturbations of market beliefs and investor preferences and obtains first-order sensitivity formulas in a Black-Scholes-Merton setting. The appendix \ref{sec:Appendix} recalls the Bell polynomials and Fa\`a di Bruno formula used throughout the paper.

\section{Setup}\label{secModel}
\subsection{Semimartingale market and dual domain}
 
Let \(S=(S^1,\dots,S^d)\) be a \(d\)-dimensional semimartingale on a
filtered probability space
$(\Omega,\mathcal F,(\mathcal F_t)_{0\leq t\leq T},\mathbb P)$,
satisfying the usual conditions. We interpret \(S\) as the price process of the risky assets discounted by the units of the riskless asset.
For \(x>0\), let \(\mathcal X(x)\) denote the set of nonnegative wealth
processes of the form
\[
X_t=x+\int
_
0^t H_u\,dS_u,\qquad 0\le t\le T,
\]
where \(H\) is predictable and \(S\)-integrable.
The dual domain is
\[
\mathcal Y
:=
\left\{
Y\ge 0:\ Y_0=1,\ XY \text{ is a supermartingale for every }
X\in\mathcal X(1)
\right\}.
\]
 
The utility-maximization problem in semimartingale models has been studied
extensively in the literature. We refer to
 \cite{KS99, KZ03, KS2003,  Zit05, KS06, Mostovyi2015, BiaginiCerny2020, Mon22}
for general background on primal-dual formulations, existence of optimizers,
and sensitivity analysis in incomplete markets. 

\subsection{Stochastically dominant markets}
Throughout the paper, we
work within the framework of stochastically dominant markets, 
that is, the ones satisfying the following assumption. We note that stochastically dominant markets in the context of analyticity of the value function have been studied in  \cite{MSZ}.

\begin{assumption}[Stochastic dominance of the market]
\label{asStDom}
The dual domain \(\mathcal Y\) is nonempty, and the market is
stochastically dominant in the sense of \cite{MSZ}. That is, there
exists \(\widehat Y\in\mathcal Y\) such that
\[
\int
_
0^\infty \mathbb P[\widehat Y_T>x]\,dx
\ge
\int
_
0^\infty \mathbb P[Y_T>x]\,dx,
\qquad Y\in\mathcal Y.
\]
\end{assumption}
We note that non-emptiness of $\mathcal Y$ is required for the absence of arbitrage-type condition  NUPBR  introduced in \cite{KaratzasKardaras2007}, which in turn is crucial for the non-degeneracy of the utility maximization problem, see \cite[Proposition 4.19]{KaratzasKardaras2007}. 
\begin{remark} By \cite[Proposition 7.2]{MSZ}, Assumption \ref{asStDom} is equivalent to the
seemingly weaker notion of infinite-order stochastic dominance. We use
the present formulation because it is more convenient for the subsequent
analysis.
\end{remark}

\subsection{The CMIM class}
 
The class of completely monotonic inverse marginal utilities (CMIM) and its
connection with stochastically dominant markets were introduced in
\cite{MSZ}. One of the key features of this framework is that the dual
optimizer is independent of the particular utility function within the CMIM
class, leading to strong regularity properties of the associated value
functions.

\subsection{Bernstein representation}

The tractability of optimization problems with CMIM utilities is closely connected to the Bernstein theorem (see, e.g., \cite{Widder1946, SSV2012, Merkle2014}), which asserts that every completely monotonic function admits a unique representation as the Laplace transform
of a Radon measure on $[0,\infty)$. 
For every \(I\in\mathcal C\), Bernstein's theorem yields the representation
\begin{equation}\label{defI}
I(y)=\int_0^\infty e^{-yz}\,\mu(dz),
\qquad y>0,
\end{equation}
for a nonnegative sigma-finite measure $\mu$ on $[0, \infty)$ such that the integral
converges for every $x > 0$.

\subsection{Affine perturbations of Bernstein measures}
Let
\[
\mathcal C
:=
\left\{
I:(0,\infty)\to(0,\infty):
I \text{ is completely monotonic},
\ \lim_{y\downarrow 0} I(y)=\infty,
\ \lim_{y\uparrow\infty} I(y)=0
\right\}.
\]
Fix \(\varepsilon_0>0\) and consider a family of utility functions
\[
U^\varepsilon:(0,\infty)\to\mathbb R,
\qquad \varepsilon\in(-\varepsilon_0,\varepsilon_0),
\]
which are strictly concave, strictly increasing, continuously
differentiable, satisfy the Inada conditions, and whose inverse marginal
utilities
\begin{equation}\label{Ie}
I^\varepsilon:=(U^{\varepsilon\prime})^{-1} ~~\text{ belong~to~} \mathcal C~\text{ for~every~}\varepsilon\in(-\varepsilon_0,\varepsilon_0).
\end{equation}
The following examples, adapted from \cite{MSZ}, illustrate the Bernstein representation for several classical utility functions. They also motivate the perturbation framework introduced in Assumption \ref{asBernstein}.
\begin{example}\label{exPower}
\begin{enumerate}[(i)]\item
Let $\mu(dz) = dz$, then \eqref{defI} yields  $$I(y) = \frac 1y, \qquad y>0,$$ 
which is the inverse marginal utility of logarithmic utility. 
\item For $p<1, p\neq 0$, let $q$ be given via
$$\frac 1q + \frac 1p = 1,$$
i.e., $q = -\frac{p}{1 - p}$. Then, for 
$\mu(dz) = \frac 1{\Gamma(1-q)}z^{-q}dz$, where  $\Gamma$ is the Gamma function, see \cite[p. viii, formula (2)]{SSV2012}, 
\eqref{defI} gives $$I(y) = y^{\frac 1{p-1}},\qquad y>0,$$
corresponding to power utility. 
\item Let $p_i<1, p_i\neq 0$, let $q_i = -\frac{p_i}{1 - p_i}$, and for some $N\in\mathbb N$, let 
$$\mu = \sum\limits_{i = 1}^{N} \mu_i,\qquad 
 where \qquad \mu_i(dz) = \frac 1{\Gamma(1-q_i)}z^{-q_i}dz.$$ In this case,
$$I(y) = \sum\limits_{i = 1}^{N} y^{\frac 1{p_i-1}},\qquad y>0,$$
and so $I\in\mathcal C$.  
\end{enumerate}
\end{example}


These examples will be revisited in Section \ref{secCRRAex}, where the general results yield explicit expansions. 
 For the sensitivity analysis, we consider affine perturbations of Bernstein measures, thus \eqref{Ie} will be sharpened via   the following assumption. 
\begin{assumption}[Affine perturbations of Bernstein measures]
\label{asBernstein}
There exist 
 a Radon measure \(\mu^0\) on
\((0,\infty)\), and a signed Radon measure \(\nu\) on \((0,\infty)\) such that,
for every \(\varepsilon\in(-\varepsilon_0,\varepsilon_0)\),
\begin{equation} \label{pertParam}
\mu^\varepsilon=\mu^0+\varepsilon\nu
\end{equation}
is a nonnegative Radon measure on \((0,\infty)\), and
\[
I^\varepsilon(y)
=
\int_0^\infty e^{-yz}\,\mu^\varepsilon(dz),
\qquad y>0.
\]

Moreover,
\[
\int_0^\infty e^{-yz}\,|\nu|(dz)<\infty,
\qquad y>0,
\]
and there exists \(y_0>0\) such that
\begin{equation}\label{integrabilityOfnu}
\int_0^\infty \frac{e^{-zy_0}}{  z}\,|\nu|(dz)<\infty,
\qquad
\int_0^\infty \frac{e^{-zy_0}}{  z}\,\mu^0(dz)<\infty.
\end{equation}
 \end{assumption}
 
We denote by \(\mathcal M\) the class of nonnegative Radon measures \(\mu\) on \((0,\infty)\) whose Laplace transforms belong to \(\mathcal C\) and such that, for some \(y_0>0\), \[ \int_0^\infty \frac{e^{-zy_0}}{z}\,\mu(dz)<\infty.\]
 
 \begin{remark} As pointed out in Assumption \ref{asBernstein}, 
the representing Bernstein measure is
supported on $(0,\infty)$, since an atom at $0$ would imply
$$\lim_{y\to\infty}I(y)>0,$$
contradicting the Inada condition.  
\end{remark}

\subsection{Primal and dual value functions}


For \((y,\varepsilon)\in(0,\infty)\times(-\varepsilon_0,\varepsilon_0)\), define the
convex conjugate
\[
V^\varepsilon(y)
:=
\sup_{x>0}\bigl(U^\varepsilon(x)-xy\bigr).
\]

The primal and dual value functions are
\begin{equation}\label{defv}
u(x,\varepsilon):=
\sup_{X\in\mathcal X(x)}
\mathbb E\!\left[U^\varepsilon(X_T)\right],
\qquad
(x,\varepsilon)\in(0,\infty)\times(-\varepsilon_0,\varepsilon_0),
\end{equation}
and
\begin{equation}\label{defv}
v(y,\varepsilon)
:=
\inf_{Y\in\mathcal Y}
\mathbb E\!\left[V^\varepsilon(yY_T)\right],
\qquad
(y,\varepsilon)\in(0,\infty)\times(-\varepsilon_0,\varepsilon_0).
\end{equation}

Since  perturbations are parametrized on the level of the derivative of the utility functions, the latter are specified up to an additive constant.
We fix this ambiguity by choosing a reference point
\(y_*>0\) and imposing
\begin{equation}\label{Vnorm}
V^\varepsilon(y_*)=0, \qquad \varepsilon\in(-\varepsilon_0,\varepsilon_0).
\end{equation}
Equivalently,
\[
V^\varepsilon(y)
=
\int_0^\infty
\frac{e^{-yz}-e^{-y_*z}}{z}
\,\mu^\varepsilon(dz),
\qquad y>0.
\]

\begin{remark}
The normalization \eqref{Vnorm} is imposed without loss of generality.
Indeed, replacing a utility function by a positive affine transformation
\[
\widetilde U(x)=aU(x)+c,\qquad a>0,
\]
does not affect the optimizer or the optimal trading strategy. It merely
rescales and shifts the value function,
\[
\widetilde u(x,\varepsilon)=au(x,\varepsilon)+c.
\]
The corresponding convex conjugate satisfies
\[
\widetilde V(y)=aV\!\left(\frac{y}{a}\right)+c,
\]
so the additive normalization of the dual function is arbitrary. 
Condition \eqref{Vnorm} simply fixes the additive normalization of the dual function.
\end{remark}
\begin{assumption}[Dual finiteness]
\label{finValue}
Assume that
\[
v(y,\varepsilon)<\infty,
\qquad
(y,\varepsilon)\in (0,\infty)\times(-\varepsilon_0,\varepsilon_0).
\]
\end{assumption}
 
%
%

Under Assumptions~\ref{asStDom} and \ref{finValue}, e.g., following the argument in \cite{MSZ}, one can show that the same element
\(\widehat Y\in\mathcal Y\) attains the dual infimum for every preference
specification under consideration. Hence
\[
v(y,\varepsilon)
=
\mathbb E\!\left[V^\varepsilon(y\widehat Y_T)\right],
\qquad
(y,\varepsilon)\in (0,\infty)\times(-\varepsilon_0,\varepsilon_0).
\]

\subsection{Budget equation and affine dual linearity}

For \((x,\varepsilon)\in(0,\infty)\times(-\varepsilon_0,\varepsilon_0)\), the
Lagrange multiplier \(y(x,\varepsilon)\) is determined by the budget
equation
\begin{equation}\label{budeq}
\mathbb E\left[
\widehat Y_T I^\varepsilon
\bigl(y(x,\varepsilon)\widehat Y_T\bigr)
\right]=x.
\end{equation}
Using the Bernstein representation of \(I^\varepsilon\) and Tonelli's theorem, the budget equation \eqref{budeq} can be rewritten as
\[
\int_0^\infty
\mathbb E\left[
\widehat Y_T e^{-y(x,\varepsilon)\widehat Y_Tz}
\right]\mu^\varepsilon(dz)=x.
\]

%
%
%
%
%
%
%
%
%

\section{Affine Structure and Analyticity}\label{secMain}
 Let  
\begin{equation}\label{defg0}
g_0(y):=\E\!\left[\widehat Y_T I^0(y\widehat Y_T)\right]=
\int_0^\infty
\E\left[
\widehat Y_T e^{-y\widehat Y_T z}
\right]\mu^0(dz),
\end{equation}
\begin{equation}\label{defJ}
J(y):=\int_0^\infty e^{-yz}\nu(dz),
\end{equation}
 and
 \begin{equation}\label{defg1}
g_1(y):=\E\!\left[\widehat Y_T J(y\widehat Y_T)\right]=
\int_0^\infty
\E\left[
\widehat Y_T e^{-y\widehat Y_T z}
\right]\nu(dz).
\end{equation}

\begin{theorem}[Affine structure of the dual value]\label{thm1}
Assume that Assumptions~\ref{asStDom}, \ref{asBernstein}, and \ref{finValue}, and the normalization condition \eqref{Vnorm} hold. 
Then, for every \(y>0\), the dual value function admits the affine representation 
\[
v(y,\varepsilon)=v_0(y)+\varepsilon v_1(y),
\]
where
\[
v_0(y):
=
\E\left[
\int_0^\infty
\frac{e^{-y\widehat Y_Tz}-e^{-y_*z}}{z}\,\mu^0(dz)
\right],
\]
and
\[
v_1(y):
=
\E\left[
\int_0^\infty
\frac{e^{-y\widehat Y_Tz}-e^{-y_*z}}{z}\,\nu(dz)
\right].
\]
Consequently, we have
\[
-v_y(y,\varepsilon)=g_0(y)+\varepsilon g_1(y).
\]

\end{theorem}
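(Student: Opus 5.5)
The proof starts from the fact, stated just before the theorem, that under Assumptions~\ref{asStDom} and \ref{finValue} the same $\widehat Y$ attains the dual infimum for every $\varepsilon$. Hence
\[
v(y,\varepsilon)=\E\!\left[V^\varepsilon(y\widehat Y_T)\right].
\]
Next I insert the normalized representation
\[
V^\varepsilon(y)=\int_0^\infty \frac{e^{-yz}-e^{-y_*z}}{z}\,\mu^\varepsilon(dz).
\]
This follows from $V^{\varepsilon\prime}=-I^\varepsilon$, the condition $V^\varepsilon(y_*)=0$, and Tonelli's theorem applied to the nonnegative integrand, after writing $\int_{y_*}^y e^{-sz}\,ds$. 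With $\mu^\varepsilon=\mu^0+\varepsilon\nu$, splitting the $dz$-integral pointwise in $\omega$ gives
\[
V^\varepsilon(y\widehat Y_T)=A_0+\varepsilon A_1,
\]
where $A_0$ and $A_1$ are the integrals against $\mu^0$ and $\nu$. The pointwise split is legitimate once $\int \frac{|e^{-wz}-e^{-y_*z}|}{z}\,|\nu|(dz)<\infty$ for $w>0$. Near $z=\infty$ this follows from \eqref{integrabilityOfnu} together with $\int e^{-yz}|\nu|(dz)<\infty$ for every $y$. Near $z=0$ it follows from the bound $|e^{-wz}-e^{-y_*z}|\le |w-y_*|z$ and local finiteness of $|\nu|$, with care near $z\downarrow0$ using the Laplace finiteness.

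The main obstacle is exchanging $\E$ with the decomposition, i.e. showing $\E|A_0|<\infty$ and $\E|A_1|<\infty$ separately rather than only for the combination $A_0+\varepsilon A_1$. I would use Assumption~\ref{finValue} at two parameter values. Since $\mu^{\pm\varepsilon_1}=\mu^0\pm\varepsilon_1\nu$ are nonnegative for some $0<\varepsilon_1<\varepsilon_0$, we have
\[
A_0=\tfrac12\bigl(V^{\varepsilon_1}+V^{-\varepsilon_1}\bigr)(y\widehat Y_T),
\qquad
A_1=\tfrac1{2\varepsilon_1}\bigl(V^{\varepsilon_1}-V^{-\varepsilon_1}\bigr)(y\widehat Y_T).
\]
Each $V^{\pm\varepsilon_1}(y\widehat Y_T)$ has finite expectation. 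The upper bound comes from $v<\infty$. For the lower bound I would use the standard fact that $\E[V^-]<\infty$ whenever $\E[V(y\widehat Y_T)]<\infty$; alternatively, and more concretely, I would use $V^\varepsilon(y\widehat Y_T)\ge V^\varepsilon(y_*)-I^\varepsilon(y_*)(y\widehat Y_T-y_*)$ by convexity, together with $\E[\widehat Y_T]\le1$. Thus both quantities are integrable, linearity of expectation gives $v=v_0+\varepsilon v_1$, and $v_0,v_1$ agree with the stated formulas.

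For the derivative, I differentiate $v_0$ and $v_1$ in $y$ under $\E$ and $\int dz$. Formally, $\partial_y \frac{e^{-y\widehat Y_T z}-e^{-y_*z}}{z}=-\widehat Y_T e^{-y\widehat Y_T z}$, which yields $-v_y=g_0+\varepsilon g_1$ by \eqref{defg0}, \eqref{defg1}. To justify this I would avoid dominated convergence in $y$ directly. Instead I would write
\[
v_i(y)-v_i(y')=-\int_{y'}^{y} g_i(s)\,ds,
\]
by Fubini on the triple integral over $s$, $\omega$ and $z$. Absolute integrability reduces, via $|\nu|$, to finiteness of $\int_{y'}^y \E[\widehat Y_T \int e^{-s\widehat Y_T z}|\nu|(dz)]\,ds$. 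This in turn is the budget-type quantity for the nonnegative measures $\mu^{\pm\varepsilon_1}$, which is finite because $-v_y(\cdot,\pm\varepsilon_1)$ is finite by the standard duality theory. The fundamental theorem of calculus then gives the derivative, provided continuity of $g_i$ holds; that continuity follows from monotone convergence in $s$ for each nonnegative piece.
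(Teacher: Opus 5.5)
Your proposal is correct and follows the same route as the paper. Both use the common dual optimizer $\widehat Y$ to write $v(y,\varepsilon)=\E[V^\varepsilon(y\widehat Y_T)]$, insert the normalized Bernstein representation of $V^\varepsilon$, split along $\mu^\varepsilon=\mu^0+\varepsilon\nu$, and differentiate using $(V^\varepsilon)'=-I^\varepsilon$ and $I^\varepsilon=I^0+\varepsilon J$. The paper skips two justifications that you supply, and both are sound: the separate integrability of the two affine pieces (via the parameters $\pm\varepsilon_1$ and the convexity lower bound), and the Fubini plus fundamental-theorem-of-calculus argument for differentiating under $\E$.
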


\begin{proof}
By Assumptions~\ref{asStDom} and~\ref{finValue},
\[
v(y,\varepsilon)
=
\E\left[
V^\varepsilon(y\widehat Y_T)
\right].
\]
Using the normalization condition \eqref{Vnorm}, which asserts that \(V^\varepsilon(y_*)=0\), we have
\[
V^\varepsilon(y)
=
\int_0^\infty
\frac{e^{-yz}-e^{-y_*z}}{z}\,\mu^\varepsilon(dz).
\]
Since 
\[
\mu^\varepsilon=\mu^0+\varepsilon\nu,
\]
it follows that
\[
v(y,\varepsilon)
=
v_0(y)+\varepsilon v_1(y).
\]

Differentiating with respect to \(y\), and using
\[
(V^\varepsilon)'(y)=-I^\varepsilon(y),
\]
we obtain
\[
-v_y(y,\varepsilon)
=
\E\left[
\widehat Y_T I^\varepsilon(y\widehat Y_T)
\right].
\]
By Assumption~\ref{asBernstein}, we have 
\[
I^\varepsilon=I^0+\varepsilon J,
\]
where $J$ is defined in \eqref{defJ}.
Therefore,
\[
-v_y(y,\varepsilon)
=
\E\left[
\widehat Y_T I^0(y\widehat Y_T)
\right]
+
\varepsilon
\E\left[
\widehat Y_T J(y\widehat Y_T)
\right].
\]
The integral representations of \(g_0\) and \(g_1\) follow from the
Bernstein representations of \(I^0\) and \(J\).
\end{proof}


%

The next theorem establishes the analyticity of the Lagrange multiplier. For notational simplicity, we suppress the dependence of the expansion coefficients on the initial wealth $x$.

\begin{theorem}\label{thm2}(Analyticity and recursive expansion of the Lagrange multiplier)
Under the conditions of Theorem \ref{thm1}, fix \(x>0\), and let \(y_0=y(x, 0) = u_x(x, 0)\) be the unique solution of
\[
g_0(y_0)=x,
\]
where $g_0$ is given by \eqref{defg0}.
Then
\[
g_0'(y_0)<0.
\]
Consequently, with $g_1$ defined in \eqref{defg1}, the equation
\[
g_0(y(x,\varepsilon))
+
\varepsilon g_1(y(x,\varepsilon))
=
x
\]
has a unique analytic solution near \(\varepsilon=0\). Namely,
\[
y(x,\varepsilon)=
\sum_{m=0}^\infty
\frac{\varepsilon^m}{m!}y_m .
\]
The coefficients are determined recursively as follows. For \(m\ge1\),
\begin{equation}\label{defym}
y_m
=
-\frac{A_m}{g_0'(y_0)},
\end{equation}
where
\[
A_m
=
\sum_{r=2}^{m}
g_0^{(r)}(y_0)
B_{m,r}(y_1,\ldots,y_{m-r+1})
+
m\sum_{r=1}^{m-1}
g_1^{(r)}(y_0)
B_{m-1,r}(y_1,\ldots,y_{m-r}).
\]
Here \(B_{n,k}\) denotes the partial Bell polynomial, as presented in Appendix \ref{sec:Appendix}, with
\[
B_{0,0}=1,
\qquad
B_{n,0}=0,\quad n\ge1.
\]
\end{theorem}

\begin{proof}

By Theorem \ref{thm1}, we have 
\[
g_0(y)=\E[\widehat Y_T I^0(y\widehat Y_T)].
\]
Since \(I^0\in\mathcal C\), Bernstein's theorem implies
\[
(I^0)'(u)<0,\qquad u>0.
\]
Therefore
\[
g_0'(y)
=
\E\left[\widehat Y_T^2(I^0)'(y\widehat Y_T)\right]
<0,
\]
and in particular
\[
g_0'(y_0)<0.
\]
Moreover, by their integral representations in \eqref{defg0} and \eqref{defg1}, $g_0$ and $g_1$
 are analytic on $(0,\infty)$, as follows from the standard analyticity properties of Laplace transforms.

Define
\[
G(y,\varepsilon)
=
g_0(y)+\varepsilon g_1(y)-x.
\]
Then
\[
G(y_0,0)=0,
\qquad
G_y(y_0,0)=g_0'(y_0)<0.
\]
By the analytic implicit function theorem, there exists a unique analytic
function
\[
\varepsilon\mapsto y(x,\varepsilon)\]
near \(0\) such that
\[
G(y(x,\varepsilon), \varepsilon)=0.
\]

It remains to derive the recursive formula for the coefficients. Write
\[
y(x,\varepsilon)=
\sum_{m=0}^{\infty}\frac{\varepsilon^m}{m!}y_m .
\]
Differentiating
\[
g_0(y(x,\varepsilon))
+
\varepsilon g_1(y(x,\varepsilon))
-
x
=
0
\]
\(m\) times with respect to \(\varepsilon\), evaluating at
\(\varepsilon=0\), and applying the Faà di Bruno formula gives
\[
g_0'(y_0)y_m+A_m=0,
\]
where
\[
A_m
=
\sum_{r=2}^{m}
g_0^{(r)}(y_0)
B_{m,r}(y_1,\ldots,y_{m-r+1})
+
m\sum_{r=1}^{m-1}
g_1^{(r)}(y_0)
B_{m-1,r}(y_1,\ldots,y_{m-r}).
\]
Since \(g_0'(y_0)<0\), this yields
\[
y_m=-\frac{A_m}{g_0'(y_0)}.
\]
\end{proof}
 
\begin{remark}
The strict negativity of \(g_0'\) established in the proof of Theorem~\ref{thm2}
provides an alternative to the second-order differentiability theory
developed by Kramkov and S\^{\i}rbu \cite{KS06}.

Indeed, if the utility function \(U^0\) had bounded relative risk aversion,
that is,
\[
0<c
\le
-\frac{xU^{0\prime\prime}(x)}{U^{0\prime}(x)}
\le C
<\infty,
\qquad x>0,
\]
then the results of \cite{KS06} would imply the twice differentiability of
the dual value function \(v_0\), and in particular the positivity of
\[
v_0''(y)>0,
\qquad y>0.
\]
Since
\[
g_0=-v_0',
\]
this would immediately yield
\[
g_0'(y)=-v_0''(y)<0.
\]

However, the CMIM condition, in general, does not imply bounded relative risk aversion.
Indeed, if
\[
I^0(y)=\int_0^\infty e^{-yz}\,\mu^0(dz),
\]
then
\[
-\frac{xU^{0\prime\prime}(x)}{U^{0\prime}(x)}
=
-\frac{I^0(y)}{y(I^0)'(y)}
=
\frac{
\displaystyle \int_0^\infty e^{-yz}\,\mu^0(dz)
}{
\displaystyle y\int_0^\infty z e^{-yz}\,\mu^0(dz)
},
\qquad x=I^0(y),
\]
and the latter quantity need not be bounded above or below away from zero.

Consequently, the argument used in the proof of Theorem~\ref{thm2} relies directly
on complete monotonicity and Bernstein's representation theorem rather than
on bounded-risk-aversion assumptions.
\end{remark}

In particular, the first coefficients are 
\[y_0=y(x, 0),\]
\[
y_1
=
-\frac{g_1(y_0)}{g_0'(y_0)}.
\]

\[
y_2
=
-\frac{
g_0''(y_0)y_1^2
+
2g_1'(y_0)y_1
}{
g_0'(y_0)
}.
\]

\[
y_3
=
-\frac{
g_0'''(y_0)y_1^3
+
3g_0''(y_0)y_1y_2
+
3g_1''(y_0)y_1^2
+
3g_1'(y_0)y_2
}{
g_0'(y_0)
}.
\]

\begin{remark}

The classical identities
\[
u(x,\varepsilon)=
v(y(x,\varepsilon), \varepsilon)
+
xy(x,\varepsilon),
\]
where \(y(x,\varepsilon) = u_x(x,\varepsilon)\), 
and
\[
\widehat X_T(x,\varepsilon)=
I^\varepsilon(y(x,\varepsilon)\widehat Y_T)
\]
indicate that the analyticity of the Lagrange multiplier should propagate
to the primal value function and the optimizer. Theorems~\ref{thm3} and~\ref{thm4} make
this precise.
\end{remark}
We start from an auxiliary lemma.

\begin{lemma}\label{lemAPC} (Affine perturbations and analytic composition)
Let
\[
 F^\varepsilon(z)=F_0(z)+\varepsilon F_1(z), \qquad z(\varepsilon)
=
\sum_{m=0}^\infty
\frac{\varepsilon^m}{m!}z_m, \qquad \varepsilon\in(-\varepsilon_0, \varepsilon_0), 
\]
where \(F_0\) and \(F_1\) are analytic functions, and $\varepsilon\to z(\varepsilon)$
is analytic in a neighborhood of \(\varepsilon=0\).
Then
\[
F^\varepsilon(z(\varepsilon))
=
\sum_{m=0}^\infty
\frac{\varepsilon^m}{m!}F_m,
\]
where
\[
F_0=F_0(z_0),
\]
and, for \(m\geq 1\),
\[
F_m
=
\sum_{r=1}^m
F_0^{(r)}(z_0)
B_{m,r}
(z_1,\dots,z_{m-r+1})
+
m
\sum_{r=0}^{m-1}
F_1^{(r)}(z_0)
B_{m-1,r}
(z_1,\dots,z_{m-r}),
\]
where \(B_{n,k}\) denotes the partial Bell polynomial.

\end{lemma}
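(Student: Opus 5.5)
The plan is to reduce everything to the Faà di Bruno formula for single compositions, applied separately to the two pieces $F_0(z(\varepsilon))$ and $\varepsilon F_1(z(\varepsilon))$.

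\emph{Step 1: analyticity.} Since $z(\varepsilon)$ is analytic near $0$ and $F_0$, $F_1$ are analytic near $z_0$, both compositions $F_0\circ z$ and $F_1\circ z$ are analytic near $\varepsilon=0$. Multiplying the second by $\varepsilon$ keeps it analytic, so $F^\varepsilon(z(\varepsilon))$ is analytic. Its Taylor coefficients are therefore its derivatives at $\varepsilon=0$:
\[
F_m=\frac{d^m}{d\varepsilon^m}\Big|_{\varepsilon=0}F^\varepsilon(z(\varepsilon)).
\]

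\emph{Step 2: the $F_0$ term.} By the Faà di Bruno formula from Appendix \ref{sec:Appendix}, for $m\ge1$,
\[
\frac{d^m}{d\varepsilon^m}\Big|_{\varepsilon=0}F_0(z(\varepsilon))
=\sum_{r=1}^m F_0^{(r)}(z_0)\,B_{m,r}(z_1,\dots,z_{m-r+1}).
\]
Here $z^{(j)}(0)=z_j$ because of the normalization $\varepsilon^m/m!$. This is exactly the first sum in the statement.

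\emph{Step 3: the $\varepsilon F_1$ term.} Put $h(\varepsilon)=F_1(z(\varepsilon))$. By Leibniz's rule,
\[
(\varepsilon h)^{(m)}=\varepsilon h^{(m)}+m\,h^{(m-1)}.
\]
At $\varepsilon=0$ this gives $m\,h^{(m-1)}(0)$.
\begin{itemize}
\item If $m-1\ge1$, Faà di Bruno gives $h^{(m-1)}(0)=\sum_{r=1}^{m-1}F_1^{(r)}(z_0)B_{m-1,r}(z_1,\dots,z_{m-r})$. The $r=0$ term can be added because $B_{m-1,0}=0$ for $m-1\ge1$.
\item If $m=1$, we have $h(0)=F_1(z_0)$. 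This matches the $r=0$ term, since $B_{0,0}=1$.
\end{itemize}
So the unified sum $\sum_{r=0}^{m-1}$ with the stated conventions on $B_{n,0}$ is valid for every $m\ge1$. The $m=0$ case is simply $F_0(z_0)$.

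\emph{Main obstacle.} The calculation itself is routine. The point that needs care is the domain: $F_0$ and $F_1$ must be analytic on a neighborhood of $z_0$, and $z(\varepsilon)$ must stay in that neighborhood, so the composition makes sense only for small $\varepsilon$. I would therefore state the expansion for $\varepsilon$ in a neighborhood of $0$, possibly smaller than $(-\varepsilon_0,\varepsilon_0)$. It is also worth noting that the symbol $F_0$ is used both for the function and for the coefficient $F_0(z_0)$; this causes no ambiguity once the indices are tracked.
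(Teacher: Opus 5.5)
Your proof is correct and follows the same route as the paper: apply the Fa\`a di Bruno formula separately to $F_0(z(\varepsilon))$ and to $\varepsilon F_1(z(\varepsilon))$, then add the two expansions. Where the paper's two-line argument leaves things implicit, you spell them out: Leibniz's rule for the factor $\varepsilon$ in place of the paper's series reindexing, and the boundary cases via $B_{0,0}=1$, $B_{n,0}=0$ (including the $m=0$ coefficient, for which the paper's displayed sum over $r=1,\dots,m$ is empty).
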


\begin{proof}
By the Fa\`a di Bruno formula,
\[
F_0(z(\varepsilon))
=
\sum_{m=0}^\infty
\frac{\varepsilon^m}{m!}
\sum_{r=1}^m
F_0^{(r)}(z_0)
B_{m,r}
(z_1,\dots,z_{m-r+1}),
\]
while
\[
\varepsilon F_1(z(\varepsilon))
=
\sum_{m=1}^\infty
\frac{\varepsilon^m}{m!}
m
\sum_{r=0}^{m-1}
F_1^{(r)}(z_0)
B_{m-1,r}
(z_1,\dots,z_{m-r}).
\]
Adding the two expansions yields the result.
\end{proof}

The preceding lemma explains the similarity between the coefficient formulas
for the primal value function
\[
u(x,\varepsilon)=
v_0(y(x,\varepsilon))
+
\varepsilon v_1(y(x,\varepsilon))
+
xy(x,\varepsilon),
\]where \(y(x,\varepsilon) = u_x(x,\varepsilon)\), 
and those for the optimal terminal wealth
\[
\widehat X_T(x,\varepsilon)=
I^0(y(x,\varepsilon)\widehat Y_T)
+
\varepsilon
J(y(x,\varepsilon)\widehat Y_T).
\]

Indeed, both are particular cases of the composition
\[
F^\varepsilon(z(\varepsilon))
=
F_0(z(\varepsilon))
+
\varepsilon F_1(z(\varepsilon)),
\]
with
$
(F_0,F_1)
=
(v_0,v_1)
$
in the first case and
$
(F_0,F_1)
=
(I^0,J)
$
in the second.

The next theorem establishes the analyticity of the primal quantities.
\begin{theorem}\label{thm3}(Analyticity of the primal value function)
Assume the conditions of Theorem~2. Then, for every \(x>0\),
\[
u(x,\varepsilon)=
v(y(x,\varepsilon), \varepsilon)
+
xy(x,\varepsilon)\]where \(y(x,\varepsilon) = u_x(x,\varepsilon)\), 
is analytic in \(\varepsilon\) near \(0\). Moreover,
\[
u(x,\varepsilon)=
\sum_{m=0}^{\infty}
\frac{\varepsilon^m}{m!}u_m(x).
\]

Let
\[
y(x,\varepsilon)=
\sum_{m=0}^{\infty}
\frac{\varepsilon^m}{m!}y_m,
\]
where the coefficients \(y_m\) are given by Theorem~\ref{thm2}. Then

\[
u_0(x)=v_0(y_0)+xy_0.
\]
For \(m\geq1\),
\[
\begin{aligned}
u_m(x)
={}&
\sum_{r=1}^{m}
v_0^{(r)}(y_0)
B_{m,r}(y_1,\ldots,y_{m-r+1})
\\
&+
m
\sum_{r=0}^{m-1}
v_1^{(r)}(y_0)
B_{m-1,r}(y_1,\ldots,y_{m-r})
+
xy_m.
\end{aligned}
\]
Here \(B_{n,k}\) denotes the partial Bell polynomial, with
\[
B_{0,0}=1,
\qquad
B_{n,0}=0
\quad
\text{for } n\geq1.
\]

Equivalently, using the first-order condition
\[
g_0(y_0)=x,
\]
or, equivalently,
\[
v_0'(y_0)+x=0,
\]
the term involving \(y_m\) cancels. Hence, for \(m\geq1\),
\[
\begin{aligned}
u_m(x)
={}&
\sum_{r=2}^{m}
v_0^{(r)}(y_0)
B_{m,r}(y_1,\ldots,y_{m-r+1})
\\
&+
m
\sum_{r=0}^{m-1}
v_1^{(r)}(y_0)
B_{m-1,r}(y_1,\ldots,y_{m-r}).
\end{aligned}
\]
\end{theorem}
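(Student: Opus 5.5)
The plan is to reduce everything to Lemma~\ref{lemAPC} with $(F_0,F_1)=(v_0,v_1)$ and $z(\varepsilon)=y(x,\varepsilon)$, and then add the affine term $xy(x,\varepsilon)$.

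First, I will justify the identity $u(x,\varepsilon)=v(y(x,\varepsilon),\varepsilon)+xy(x,\varepsilon)$ by standard conjugate duality. Under Assumptions~\ref{asStDom} and~\ref{finValue}, the dual value is $v(\cdot,\varepsilon)=\E[V^\varepsilon(\cdot\,\widehat Y_T)]$. It is strictly convex and differentiable, with $-v_y(y,\varepsilon)=g_0(y)+\varepsilon g_1(y)$ by Theorem~\ref{thm1}. Hence $u(x,\varepsilon)=\inf_{y>0}(v(y,\varepsilon)+xy)$. The infimum is attained exactly where $-v_y=x$, which is the budget equation solved by $y(x,\varepsilon)$ in Theorem~\ref{thm2}, and $u_x=y$ is the usual envelope relation. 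By Theorem~\ref{thm1}, this gives
\[
u(x,\varepsilon)=v_0(y(x,\varepsilon))+\varepsilon v_1(y(x,\varepsilon))+xy(x,\varepsilon).
\]

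Second, I need $v_0$ and $v_1$ to be analytic near $y_0$. Here I will use the integral representations. We have $v_0'=-g_0$ and $v_1'=-g_1$, and Theorem~\ref{thm2}'s proof already gives analyticity of $g_0,g_1$ as Laplace-type transforms. Since $v_0$ and $v_1$ are finite at $y_0$, they are primitives of analytic functions and hence analytic. The finiteness comes from Assumption~\ref{finValue} and from the integrability condition \eqref{integrabilityOfnu} for $|\nu|$, which makes $v_1$ well defined via Tonelli/Fubini.

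The main obstacle is this step. One must check that interchanging $\E$, the $z$-integral and differentiation is legitimate for the signed measure $\nu$. I would handle it by splitting $\nu=\nu^+-\nu^-$ and dominating by $|\nu|$. I would then use the derivative bounds for $y$ in a compact subinterval of $(0,\infty)$, namely $z^k\widehat Y_T^k e^{-y\widehat Y_T z}\le C_k e^{-y'\widehat Y_T z}/y^k$ for $y'<y$, and combine them with the finiteness of $g_0$ and $g_1$ on $(0,\infty)$.

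Third, Theorem~\ref{thm2} gives that $\varepsilon\mapsto y(x,\varepsilon)$ is analytic with coefficients $y_m$. Compositions and sums of analytic functions are analytic, so $u(x,\cdot)$ is analytic near $0$. Applying Lemma~\ref{lemAPC} to $v_0(y(x,\varepsilon))+\varepsilon v_1(y(x,\varepsilon))$ and adding the coefficients $xy_m$ of $xy(x,\varepsilon)$ gives the stated formula for $u_m$, with $u_0=v_0(y_0)+xy_0$.

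Finally, for the simplified formula, I will isolate the $r=1$ term of the first sum. Since $B_{m,1}(y_1,\dots,y_m)=y_m$, this term equals $v_0'(y_0)y_m$. Because $v_0'(y_0)=-g_0(y_0)=-x$, it cancels with $xy_m$, which leaves the sum over $r\ge2$ together with the unchanged $v_1$ sum.
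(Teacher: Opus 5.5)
Your proposal is correct and matches the paper's proof: you rewrite $u(x,\varepsilon)=v_0(y(x,\varepsilon))+\varepsilon v_1(y(x,\varepsilon))+xy(x,\varepsilon)$ via Theorem~\ref{thm1}, use the analyticity of $y(x,\cdot)$ from Theorem~\ref{thm2}, apply Lemma~\ref{lemAPC} with $(F_0,F_1)=(v_0,v_1)$, and cancel the $r=1$ term using $B_{m,1}(y_1,\dots,y_m)=y_m$ and $v_0'(y_0)=-x$ --- and you additionally make explicit the conjugacy identity and the analyticity of $v_0,v_1$ (as primitives of $-g_0,-g_1$), which the paper takes for granted. In the interchange step, the constant in your bound should scale like $(y-y')^{-k}$ rather than $y^{-k}$, and integrability against $|\nu|$ follows most cleanly from $|\nu|\le \mu^0/\varepsilon_0$ (a consequence of $\mu^0\pm\varepsilon\nu\ge 0$) together with $g_0(y')<\infty$.
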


\begin{proof}
By Theorem~\ref{thm2}, the map
\[
\varepsilon\mapsto y(x,\varepsilon)\]
is analytic in a neighborhood of \(0\). 
By the normalization condition \eqref{Vnorm}, we have 
\[
v(y,\varepsilon)=v_0(y)+\varepsilon v_1(y).
\]
Therefore
\[
u(x,\varepsilon)=
v_0(y(x,\varepsilon))
+
\varepsilon v_1(y(x,\varepsilon))
+
xy(x,\varepsilon).
\]
Since \(y(x,\varepsilon)\) is analytic by Theorem~\ref{thm2}, the analyticity of
\(u(x,\varepsilon)\) follows. Applying Lemma~\ref{lemAPC} with
\[
(F_0,F_1)=(v_0,v_1),
\qquad
z(\varepsilon)=y(x,\varepsilon),
\]
gives the coefficient formula.
\end{proof}
The next theorem provides an expansion of the optimal terminal wealth. 
\begin{theorem}\label{thm4}(Analyticity of the optimizer map)
Assume the conditions of Theorem~2. Then, for every \(x>0\),
\[
\widehat X_T(x,\varepsilon)=
I^\varepsilon(y(x,\varepsilon)\widehat Y_T)
\]
is analytic in \(\varepsilon\) near \(0\). Moreover,
\[
\widehat X_T(x,\varepsilon)=
\sum_{m=0}^\infty
\frac{\varepsilon^m}{m!}X_m.
\]
Let
\[
Z_m=y_m\widehat Y_T,\qquad m\ge0,
\]
where the coefficients \(y_m\) are defined in Theorem~\ref{thm2} by \eqref{defym}. Then
\[
X_0=I^0(Z_0),
\]
and, for \(m\ge1\),
\[
X_m
=
\sum_{r=1}^m
(I^0)^{(r)}(Z_0)
B_{m,r}(Z_1,\ldots,Z_{m-r+1})
+
m\sum_{r=0}^{m-1}
J^{(r)}(Z_0)
B_{m-1,r}(Z_1,\ldots,Z_{m-r}).
\]
\end{theorem}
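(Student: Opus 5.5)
The plan is to argue pathwise, for each fixed \(\omega\) with \(\widehat Y_T(\omega)>0\), and to reduce the statement to Lemma~\ref{lemAPC}. I would proceed in three steps.

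\textbf{Step 1: the affine form of the optimizer.} By Assumption~\ref{asBernstein}, \(I^\varepsilon=I^0+\varepsilon J\), with \(J\) defined in \eqref{defJ}. Hence
\[
\widehat X_T(x,\varepsilon)=I^0\bigl(y(x,\varepsilon)\widehat Y_T\bigr)+\varepsilon J\bigl(y(x,\varepsilon)\widehat Y_T\bigr).
\]
This identity for the optimizer is the one recalled in the remark preceding Lemma~\ref{lemAPC}, and it rests on the fact that \(\widehat Y\) is the dual optimizer for every \(\varepsilon\).

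\textbf{Step 2: analyticity of the ingredients.} Both \(I^0\) and \(J\) are Laplace transforms of measures whose transforms converge on \((0,\infty)\); for \(J\), the measure \(|\nu|\) is covered by Assumption~\ref{asBernstein}. They are therefore real-analytic on \((0,\infty)\), and their derivatives of all orders are obtained by differentiating under the integral,
\[
(I^0)^{(r)}(u)=\int_0^\infty (-z)^r e^{-uz}\,\mu^0(dz),
\]
and similarly for \(J\) with \(\nu\) in place of \(\mu^0\). By Theorem~\ref{thm2}, the map \(\varepsilon\mapsto y(x,\varepsilon)\) is analytic near \(0\) with Taylor coefficients \(y_m\). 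Multiplying by the fixed positive number \(\widehat Y_T(\omega)\), the map \(z(\varepsilon):=y(x,\varepsilon)\widehat Y_T(\omega)\) is analytic with coefficients \(Z_m=y_m\widehat Y_T(\omega)\), and \(z(0)=Z_0=y_0\widehat Y_T(\omega)>0\).

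\textbf{Step 3: apply the composition lemma.} Apply Lemma~\ref{lemAPC} with \((F_0,F_1)=(I^0,J)\) and this \(z(\varepsilon)\). Since \(Z_0\) lies in the open domain of analyticity \((0,\infty)\), the composition is analytic for \(\varepsilon\) small, and the lemma gives exactly \(X_0=I^0(Z_0)\) together with the stated formula for \(X_m\).

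The main point requiring care is that the radius of convergence of the expansion in \(\varepsilon\) depends on \(\omega\) through \(\widehat Y_T(\omega)\). The composition remains analytic only while \(z(\varepsilon)\) stays inside the region where the Taylor series of \(I^0\) and \(J\) about \(Z_0\) converge. For \(I^0\), which is completely monotone, that radius is at least \(Z_0\); for \(J\), I would use the bound \(\int e^{-uz}|\nu|(dz)<\infty\) for all \(u>0\). The perturbation \(|z(\varepsilon)-Z_0|\) is at most \(\widehat Y_T\sup|y(x,\varepsilon)-y_0|\), and its ratio to \(Z_0\) is at most \(\sup|y(x,\varepsilon)-y_0|/y_0\). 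This ratio does not depend on \(\omega\) and is small for small \(\varepsilon\). Consequently, the pathwise expansion converges on a neighborhood of \(0\) that is uniform in \(\omega\), and I would state the analyticity in this pathwise, almost sure sense. The set \(\{\widehat Y_T=0\}\) requires separate treatment; I would treat it as null, arguing from the finiteness of the budget equation and the Inada behaviour of \(I^0\) at \(0\).
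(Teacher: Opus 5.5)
Your argument follows the paper's proof. You write $\widehat X_T(x,\varepsilon)=I^0(y(x,\varepsilon)\widehat Y_T)+\varepsilon J(y(x,\varepsilon)\widehat Y_T)$, use the analyticity of $\varepsilon\mapsto y(x,\varepsilon)$ from Theorem~\ref{thm2}, and apply Lemma~\ref{lemAPC} with $(F_0,F_1)=(I^0,J)$ and $z(\varepsilon)=y(x,\varepsilon)\widehat Y_T$, whose coefficients are $Z_m=y_m\widehat Y_T$. Your uniform-in-$\omega$ radius argument is a correct refinement the paper leaves implicit. It works because $I^0$ and $J$ are holomorphic on the right half-plane and $|z(\varepsilon)-Z_0|/Z_0=|y(x,\varepsilon)-y_0|/y_0$ for complex $\varepsilon$ near $0$. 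One quibble: $\widehat Y_T>0$ a.s.\ does not follow from the budget equation, whose integrand is $0\cdot\infty$ on $\{\widehat Y_T=0\}$. It follows instead from the standard duality argument, namely dual optimality of $\widehat Y$ combined with $I^0(0+)=\infty$.
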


\begin{proof}
By Theorem~\ref{thm2}, the map
\[
\varepsilon\mapsto y(x,\varepsilon)\]
is analytic near \(0\). Since
\[
I^\varepsilon=I^0+\varepsilon J,
\]
we have
\[
\widehat X_T(x,\varepsilon)=
I^0(y(x,\varepsilon)\widehat Y_T)
+
\varepsilon J(y(x,\varepsilon)\widehat Y_T).
\]

Set
\[
Z_m=y_m\widehat Y_T,
\qquad m\geq0.
\]
Then
\[
y(x,\varepsilon)\widehat Y_T
=
\sum_{m=0}^{\infty}
\frac{\varepsilon^m}{m!}Z_m.
\]


Applying Lemma \ref{lemAPC} with
$$(F_0,F_1)=(I^0,J)$$
and
\[
z(\varepsilon)=y(x,\varepsilon)\widehat Y_T,
\]
gives the coefficient formula.
\end{proof}
 
\subsection{Explicit low-order coefficients: the primal value function}\label{secExpLowOrderP}
The first coefficients of the expansion of the primal value function are
\[
u_0(x)=v_0(y_0)+xy_0,
\]
\[
u_1(x)=v_1(y_0),
\]
and
\[
u_2(x)
=
v_0''(y_0)y_1^2
+
2v_1'(y_0)y_1.
\]
Using
\[
y_1=-\frac{v_1'(y_0)}{v_0''(y_0)},
\]
we obtain
\[
u_2(x)
=
-
\frac{\bigl(v_1'(y_0)\bigr)^2}{v_0''(y_0)}.
\]
In particular, the second-order coefficient is always nonpositive,
since \(v_0\) is strictly convex.

\subsection{Explicit low-order coefficients: the optimal terminal wealth}

Theorem~\ref{thm4} yields
\[
\hat X_T(x,\varepsilon)=
\sum_{m=0}^\infty
\frac{\varepsilon^m}{m!}X_m.
\]

In particular,
\[
X_0=I^0(Z_0),
\]
\[
X_1=(I^0)'(Z_0)Z_1+J(Z_0),
\]
and
\[
X_2=(I^0)''(Z_0)Z_1^2+(I^0)'(Z_0)Z_2+2J'(Z_0)Z_1.
\]


\subsection{First-order perturbations of probabilities and preferences}\label{secJointPert}
Motivated by the discussion in Section \ref{exBS}, we also study two parameter perturbations.
We briefly record a first-order extension in which both the physical
probability measure and the preference specification are perturbed.

Let
\[
\frac{d\mathbb P^\delta}{d\mathbb P}
=
Z_T^\delta,
\qquad
Z^\delta=\mathcal E(\delta M),
\]
where \(Z^\delta\) is a strictly positive true martingale for \(|\delta|\)
sufficiently small. Suppose also that
\[
I^\varepsilon=I^0+\varepsilon J.
\]
For each \((\delta,\varepsilon)\) near \((0,0)\), assume that the market
under \(\mathbb P^\delta\) remains stochastically dominant, and denote by
\(\widehat Y^\delta\) the corresponding distinguished dual element.

For fixed \(x>0\), define \(y_0\) by
\[
\mathbb E\left[
\widehat Y_T^0 I^0(y_0\widehat Y_T^0)
\right]=x.
\]
Assume that, as \(\delta\to0\),
\[
Z_T^\delta=1+\delta Z_1+o(\delta),
\]
and
\begin{equation}\label{defdotY}
\widehat Y_T^\delta
=
\widehat Y_T^0+\delta \dot Y_T+o(\delta),
\end{equation}
for some random variables $Z_1$ and $\dot Y_T$, 
in a topology allowing the differentiations below under the expectation.
Finally assume that
\[
G_y(y_0, 0,0)\neq0,
\]
where
\[
G(y,\delta,\varepsilon)
=
\mathbb E\left[
Z_T^\delta \widehat Y_T^\delta
I^\varepsilon(y\widehat Y_T^\delta)
\right]-x.
\]

\begin{proposition}[First-order expansion of the Lagrange multiplier]\label{prop1}
Under the preceding assumptions, the Lagrange multiplier
\(y(x, \delta,\varepsilon)\), defined by
\[
\mathbb E\left[
Z_T^\delta \widehat Y_T^\delta
I^\varepsilon
\left(y(x,\delta,\varepsilon)\widehat Y_T^\delta\right)
\right]
=
x,
\]
admits the first-order expansion
\[
y(x,\delta,\varepsilon)
=
y_0+\delta y_{1,0}+\varepsilon y_{0,1}
+
o(|\delta|+|\varepsilon|),
\]
where
\[
y_{1,0}
=
-\frac{G_\delta(y_0, 0,0)}{G_y(y_0, 0,0)}
\]
and
\[
y_{0,1}
=
-\frac{G_\varepsilon(y_0, 0,0)}{G_y(y_0, 0,0)}.
\]
Moreover,
\[
G_y (y_0, 0,0) 
=
\mathbb E\left[
(\widehat Y_T^0)^2
(I^0)'(y_0\widehat Y_T^0)
\right]
<0,
\]
\[
G_\varepsilon (y_0, 0,0) 
=
\mathbb E\left[
\widehat Y_T^0J(y_0\widehat Y_T^0)
\right],
\]
and
\[
G_\delta (y_0, 0,0) 
=
\mathbb E\left[
Z_1\widehat Y_T^0I^0(y_0\widehat Y_T^0)
\right]
+
\mathbb E\left[
\dot Y_T
\left(
I^0(y_0\widehat Y_T^0)
+
y_0\widehat Y_T^0(I^0)'(y_0\widehat Y_T^0)
\right)
\right].
\]
Consequently,
\[
y_{0,1}
=
-
\frac{
\mathbb E\left[
\widehat Y_T^0J(y_0\widehat Y_T^0)
\right]
}{
\mathbb E\left[
(\widehat Y_T^0)^2
(I^0)'(y_0\widehat Y_T^0)
\right]
},
\]
and
\[
y_{1,0}
=
-
\frac{
\mathbb E\left[
Z_1\widehat Y_T^0I^0(y_0\widehat Y_T^0)
\right]
+
\mathbb E\left[
\dot Y_T
\left(
I^0(y_0\widehat Y_T^0)
+
y_0\widehat Y_T^0(I^0)'(y_0\widehat Y_T^0)
\right)
\right]
}{
\mathbb E\left[
(\widehat Y_T^0)^2
(I^0)'(y_0\widehat Y_T^0)
\right]
}.
\]
\end{proposition}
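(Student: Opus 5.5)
The plan is to apply the (classical, $C^1$) implicit function theorem to the map $G(y,\delta,\varepsilon)$ near $(y_0,0,0)$. We take $G$ to be $C^1$ in a neighborhood of $(y_0,0,0)$, with partial derivatives obtained by differentiating under the expectation. This is exactly what the standing hypothesis provides: the expansions of $Z_T^\delta$ and $\widehat Y_T^\delta$ hold in a topology permitting these differentiations.

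First, observe that $G(y_0,0,0)=0$, since $Z_T^0=1$, $\widehat Y_T^0$ is the distinguished element, and $y_0$ is defined by the budget equation at $(\delta,\varepsilon)=(0,0)$. Together with $G_y(y_0,0,0)\neq 0$, the implicit function theorem yields a unique $C^1$ solution $y(x,\delta,\varepsilon)$ near $(0,0)$ with $y(x,0,0)=y_0$. Its first-order Taylor expansion at $(0,0)$ gives the stated form, with partial derivatives $-G_\delta/G_y$ and $-G_\varepsilon/G_y$ evaluated at $(y_0,0,0)$.

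Next, compute the three partial derivatives at $(y_0,0,0)$, writing $I^\varepsilon=I^0+\varepsilon J$.
\begin{itemize}
\item For $G_y$, differentiate $\mathbb E[Z_T^0\widehat Y_T^0 I^0(y\widehat Y_T^0)]$ in $y$, which gives $\mathbb E[(\widehat Y_T^0)^2(I^0)'(y_0\widehat Y_T^0)]$. This is strictly negative by the argument in the proof of Theorem~\ref{thm2}: $(I^0)'<0$ by Bernstein's representation, and $\widehat Y_T^0$ is not a.s.\ zero because $x>0$. In particular the nondegeneracy assumption is automatic.
\item For $G_\varepsilon$, linearity in $\varepsilon$ gives $\mathbb E[\widehat Y_T^0 J(y_0\widehat Y_T^0)]$.
\item For $G_\delta$, apply the product rule to $Z_T^\delta\cdot h(\widehat Y_T^\delta)$ with $h(w)=wI^0(y_0w)$. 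The derivative of $Z_T^\delta$ contributes $\mathbb E[Z_1\widehat Y_T^0 I^0(y_0\widehat Y_T^0)]$. The chain rule with $h'(w)=I^0(y_0w)+y_0w(I^0)'(y_0w)$ contributes $\mathbb E[\dot Y_T(I^0(y_0\widehat Y_T^0)+y_0\widehat Y_T^0(I^0)'(y_0\widehat Y_T^0))]$.
\end{itemize}
Substituting these into the implicit-function formulas gives the displayed expressions for $y_{0,1}$ and $y_{1,0}$.

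The main obstacle is justifying the differentiation of $G$ in $\delta$ under the expectation, and the joint continuity of the derivatives needed for the $C^1$ implicit function theorem. The paper delegates this to the assumed topology, so I would state it as the operative hypothesis and not prove it. If needed, I would argue via dominated convergence using monotonicity of $I^0$ and $(I^0)'$ in a neighborhood of $y_0$. Everything else is routine calculus.
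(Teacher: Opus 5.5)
Your proposal is correct and follows essentially the same route as the paper. Both apply the implicit function theorem to $G$ at $(y_0,0,0)$, obtain $G_y<0$ from $(I^0)'<0$, get $G_\varepsilon$ from the linearity $I^\varepsilon=I^0+\varepsilon J$, and compute $G_\delta$ by the product and chain rules under the expectation. Two of your additions only make precise what the paper leaves implicit: the explicit hypothesis that $G$ is $C^1$ near $(y_0,0,0)$ (the paper defers this to its assumed topology), and the remark that $\widehat Y_T^0$ is not a.s.\ zero because $x>0$ (which the strict negativity argument borrowed from Theorem~\ref{thm2} needs).
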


\begin{proof}
The Lagrange multiplier is characterized by
\[
G(y (x, \delta,\varepsilon), \delta,\varepsilon)=0.
\]
At \((\delta,\varepsilon)=(0,0)\), this equation becomes
\[
G(y_0, 0,0)
=
\mathbb E\left[
\widehat Y_T^0I^0(y_0\widehat Y_T^0)
\right]-x
=0.
\]
Moreover,
\[
G_y(y_0, 0,0)
=
\mathbb E\left[
(\widehat Y_T^0)^2(I^0)'(y_0\widehat Y_T^0)
\right].
\]
Since \(I^0\in\mathcal C\), we have
\[
(I^0)'(z)<0,\qquad z>0,
\]
and hence
\[
G_y(y_0, 0,0)<0.
\]
Therefore, by the implicit function theorem, \(y(x, \delta,\varepsilon)\)
is differentiable at \((0,0)\), and
\[
y(x, \delta,\varepsilon)
=
y_0+\delta y_{1,0}+\varepsilon y_{0,1}
+
o(|\delta|+|\varepsilon|),
\]
with
\[
y_{1,0}
=
-\frac{G_\delta(y_0, 0,0)}{G_y(y_0, 0,0)},
\qquad
y_{0,1}
=
-\frac{G_\varepsilon(y_0, 0,0)}{G_y(y_0, 0,0)}.
\]

It remains to compute \(G_\delta\) and \(G_\varepsilon\). Since
\[
I^\varepsilon=I^0+\varepsilon J,
\]
we immediately obtain
\[
G_\varepsilon(y_0, 0,0)
=
\mathbb E\left[
\widehat Y_T^0J(y_0\widehat Y_T^0)
\right].
\]
Next, using
\[
Z_T^\delta=1+\delta Z_1+o(\delta),
\qquad
\widehat Y_T^\delta=\widehat Y_T^0+\delta \dot Y_T+o(\delta),
\]
and differentiating
\[
Z_T^\delta \widehat Y_T^\delta
I^0(y_0\widehat Y_T^\delta)
\]
at \(\delta=0\), we obtain
\[
G_\delta(y_0, 0,0)
=
\mathbb E\left[
Z_1\widehat Y_T^0I^0(y_0\widehat Y_T^0)
\right]
+
\mathbb E\left[
\dot Y_T
\left(
I^0(y_0\widehat Y_T^0)
+
y_0\widehat Y_T^0(I^0)'(y_0\widehat Y_T^0)
\right)
\right].
\]
Substituting these expressions into the formulas for \(y_{1,0}\) and
\(y_{0,1}\) completes the proof.
\end{proof}
%
The next proposition gives a first-order expansion of the value function, and is essentially an envelope theorem.
\begin{proposition}[First-order expansion of the value function]\label{prop2}
Assume the conditions of Proposition~\ref{prop1}. Suppose that
\[
v^{\delta,\varepsilon}(y)
=
\E\left[
Z_T^\delta V^\varepsilon(y\widehat Y_T^\delta)
\right],
\]
and that differentiation with respect to \(\delta\) may be passed under the
expectation at \((\delta,\varepsilon)=(0,0)\).

Then
\[
u(x, \delta,\varepsilon)
=
u(x,0,0)
+
\delta u_{1,0}(x)
+
\varepsilon u_{0,1}(x)
+
o(|\delta|+|\varepsilon|),
\]
where
\[
u(x,0,0)
=
v^{0,0}(y_0)+xy_0,
\]
and
\[
u_{0,1}(x)
=
\left.
\frac{\partial}{\partial\varepsilon}
\right|_{\varepsilon=0}
v^{0,\varepsilon}(y_0),
\]
\[
u_{1,0}(x)
=
\left.
\frac{\partial}{\partial\delta}
\right|_{\delta=0}
v^{\delta,0}(y_0).
\]
\end{proposition}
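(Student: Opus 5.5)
The plan is to write the primal value through the conjugacy relation $u(x,\delta,\varepsilon)=v^{\delta,\varepsilon}(y(x,\delta,\varepsilon))+x\,y(x,\delta,\varepsilon)$ and then apply the chain rule. The envelope identity $\partial_y v^{0,0}(y_0)+x=0$ will cancel every term coming from the derivative of the multiplier. By Assumption~\ref{asStDom}, applied under each $\mathbb P^\delta$ with distinguished element $\widehat Y^\delta$, the dual infimum is attained at $\widehat Y^\delta$. Hence, under $\mathbb P$,
\[
v^{\delta,\varepsilon}(y)=\E\left[Z_T^\delta V^\varepsilon(y\widehat Y_T^\delta)\right],
\]
and this is convex in $y$. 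Differentiating in $y$ with $(V^\varepsilon)'=-I^\varepsilon$ gives
\[
\partial_y v^{\delta,\varepsilon}(y)=-\E\left[Z_T^\delta\widehat Y^\delta_T I^\varepsilon(y\widehat Y_T^\delta)\right]=-\bigl(G(y,\delta,\varepsilon)+x\bigr).
\]
Therefore the minimizer of $y\mapsto v^{\delta,\varepsilon}(y)+xy$ is exactly the multiplier $y(x,\delta,\varepsilon)$ of Proposition~\ref{prop1}. Standard conjugate duality then yields the identity above, exactly as in Theorem~\ref{thm3}.

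Next I would define
\[
\Phi(y,\delta,\varepsilon):=v^{\delta,\varepsilon}(y)+xy,
\]
and check that $\Phi$ is differentiable at $(y_0,0,0)$. The $\varepsilon$-dependence is affine by Theorem~\ref{thm1}, applied pathwise: $V^\varepsilon=V^0+\varepsilon V_1$, where $V_1(y)=\int_0^\infty\frac{e^{-yz}-e^{-y_*z}}{z}\nu(dz)$. The $\delta$-derivative exists by the standing hypothesis that differentiation may be passed under the expectation. The $y$-derivative is $-G-x+x=-G$, which is continuous near $(y_0,0,0)$ under the regularity already assumed in Proposition~\ref{prop1}. Then
\[
u(x,\delta,\varepsilon)-u(x,0,0)=\Phi\bigl(y(\delta,\varepsilon),\delta,\varepsilon\bigr)-\Phi(y_0,0,0).
\]
I would expand this to first order using $y(\delta,\varepsilon)-y_0=O(|\delta|+|\varepsilon|)$, which holds by Proposition~\ref{prop1}. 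This gives
\[
\Phi_y(y_0,0,0)\bigl(y-y_0\bigr)+\delta\,\Phi_\delta(y_0,0,0)+\varepsilon\,\Phi_\varepsilon(y_0,0,0)+o(|\delta|+|\varepsilon|).
\]
Since $\Phi_y(y_0,0,0)=-G(y_0,0,0)=0$ by the budget equation at $(0,0)$, the multiplier contributions $y_{1,0}$ and $y_{0,1}$ drop out. What remains is $\Phi_\delta(y_0,0,0)=\partial_\delta v^{\delta,0}(y_0)|_{\delta=0}$ and $\Phi_\varepsilon(y_0,0,0)=\partial_\varepsilon v^{0,\varepsilon}(y_0)|_{\varepsilon=0}$. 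The latter equals $v_1(y_0)$, in agreement with $u_1$ in Section~\ref{secExpLowOrderP}.

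The main obstacle is justifying the joint first-order expansion of $\Phi$ along the curve $(y(\delta,\varepsilon),\delta,\varepsilon)$. This requires more than separate partial derivatives: I need Fréchet differentiability at $(y_0,0,0)$, or at least continuity of the partials nearby. I would handle it by splitting the increment into three pieces and treating each separately:
\[
\bigl[\Phi(y,\delta,\varepsilon)-\Phi(y_0,\delta,\varepsilon)\bigr]+\bigl[\Phi(y_0,\delta,\varepsilon)-\Phi(y_0,\delta,0)\bigr]+\bigl[\Phi(y_0,\delta,0)-\Phi(y_0,0,0)\bigr].
\]
For the first bracket, I would use the mean value theorem in $y$. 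Its size is $|G(\xi,\delta,\varepsilon)|\,|y-y_0|$, and continuity of $G$ with $G(y_0,0,0)=0$ makes it $o(|\delta|+|\varepsilon|)$. The second bracket equals $\varepsilon\,\E[Z_T^\delta V_1(y_0\widehat Y_T^\delta)]$. This is $\varepsilon u_{0,1}+o(|\varepsilon|)$, by continuity in $\delta$ from the same topology assumed in \eqref{defdotY}. The third bracket is $\delta u_{1,0}+o(\delta)$ by hypothesis. Summing the three pieces proves the claimed expansion.
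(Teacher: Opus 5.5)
Your proposal is correct and follows the same envelope-theorem route as the paper. You write $u(x,\delta,\varepsilon)=v^{\delta,\varepsilon}(y(x,\delta,\varepsilon))+x\,y(x,\delta,\varepsilon)$ and let the first-order condition $v^{0,0}_y(y_0)+x=0$ eliminate the contributions of $y_{1,0}$ and $y_{0,1}$. Your three-bracket splitting sharpens the paper's one-line ``differentiate at $(0,0)$'', which tacitly assumes joint differentiability of $(y,\delta,\varepsilon)\mapsto v^{\delta,\varepsilon}(y)$ at $(y_0,0,0)$. It also shows that the cross term needs continuity of $\delta\mapsto\E[Z_T^\delta V_1(y_0\widehat Y_T^\delta)]$ at $\delta=0$; this is not literally among the stated hypotheses, but the paper's argument needs it just as much.
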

\begin{proof}
By the preceding proposition,
\[
y(x, \delta,\varepsilon)
=
y_0+\delta y_{1,0}+\varepsilon y_{0,1}
+
o(|\delta|+|\varepsilon|).
\]
Using
\[
u(x, \delta,\varepsilon)
=
v^{\delta,\varepsilon}(y(x, \delta,\varepsilon))
+
xy(x, \delta,\varepsilon),
\]
we differentiate at \((0,0)\). The terms involving the derivative of
\(y(x, \delta,\varepsilon)\) cancel because of the first-order condition
\[
v_y^{0,0}(y_0)+x=0.
\]
Therefore the first-order coefficients of the primal value function are
obtained by differentiating only the value function itself with respect to
\(\delta\) and \(\varepsilon\), with \(y=y_0\) fixed.

Hence
\[
u_{0,1}(x)
=
\left.
\frac{\partial}{\partial\varepsilon}
\right|_{\varepsilon=0}
v^{0,\varepsilon}(y_0),
\qquad
u_{1,0}(x)
=
\left.
\frac{\partial}{\partial\delta}
\right|_{\delta=0}
v^{\delta,0}(y_0),
\]
which proves the claim.
\end{proof}
\begin{proposition}[First-order expansion of the optimizer]\label{prop3}
Under the assumptions of Proposition~\ref{prop1}, the family of optimizers 
\[
\widehat X_T(x, \delta,\varepsilon)
=
I^\varepsilon
\bigl(
y(x, \delta,\varepsilon)\widehat Y_T^\delta
\bigr)
\]
admits the first-order expansion
\[
\frac{
\widehat X_T(x, \delta,\varepsilon)
-
\widehat X_T^0
-
\delta X_{1,0}
-
\varepsilon X_{0,1}
}
{|\delta|+|\varepsilon|}
\to0,
\qquad\mathbb P\text{-a.s.},
\]
where 
\[
\widehat X_T^0=I^0(y_0\widehat Y_T^0),
\]
\[
X_{0,1}
=
J(y_0\widehat Y_T^0)
+
y_{0,1}\widehat Y_T^0
(I^0)'(y_0\widehat Y_T^0),
\]
and
\[
X_{1,0}
=
(I^0)'(y_0\widehat Y_T^0)
\left(
y_{1,0}\widehat Y_T^0+y_0\dot Y_T
\right),
\]
and $\dot Y_T$ is given by \eqref{defdotY}.
\end{proposition}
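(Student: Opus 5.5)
We prove the claim pathwise, as a first-order Taylor expansion of a composition of three maps.

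Fix $\omega$ outside a null set and define
\[
\Phi(\delta,\varepsilon):=I^0\bigl(y(x,\delta,\varepsilon)\widehat Y_T^\delta\bigr)+\varepsilon J\bigl(y(x,\delta,\varepsilon)\widehat Y_T^\delta\bigr).
\]
The structural ingredients are as follows.

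\emph{Differentiability of the multiplier.} By Proposition~\ref{prop1}, the map $(\delta,\varepsilon)\mapsto y(x,\delta,\varepsilon)$ is differentiable at $(0,0)$, with partial derivatives $y_{1,0}$ and $y_{0,1}$.

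\emph{Expansion of the dual optimizer.} By \eqref{defdotY}, interpreted pathwise (a.s.), $\widehat Y_T^\delta=\widehat Y_T^0+\delta\dot Y_T+o(\delta)$.

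\emph{Smoothness of the outer functions.} Both $I^0$ and $J$ are Laplace transforms of measures satisfying Assumption~\ref{asBernstein}. They are therefore analytic, in particular $C^1$, on $(0,\infty)$.

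The first step is to expand the product argument $w(\delta,\varepsilon):=y(x,\delta,\varepsilon)\widehat Y_T^\delta$. Multiplying the two expansions gives
\[
w(\delta,\varepsilon)=y_0\widehat Y_T^0+\delta\bigl(y_{1,0}\widehat Y_T^0+y_0\dot Y_T\bigr)+\varepsilon y_{0,1}\widehat Y_T^0+o(|\delta|+|\varepsilon|).
\]
The cross terms are $\delta\cdot O(|\delta|+|\varepsilon|)$, hence $o(|\delta|+|\varepsilon|)$. Since $\widehat Y_T^0>0$ a.s. (by finiteness of $v$ and the Inada behaviour of $V$), the limit point $w_0:=y_0\widehat Y_T^0$ lies in $(0,\infty)$. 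Consequently $w(\delta,\varepsilon)$ stays in a compact subinterval of $(0,\infty)$ for small $(\delta,\varepsilon)$.

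The second step applies the mean value theorem, or equivalently the chain rule at a point of differentiability. Since $I^0$ is $C^1$ near $w_0$,
\[
I^0(w)-I^0(w_0)=(I^0)'(w_0)(w-w_0)+o(|w-w_0|).
\]
Moreover $w-w_0=O(|\delta|+|\varepsilon|)$, so the remainder is $o(|\delta|+|\varepsilon|)$. For the perturbation term, continuity of $J$ at $w_0$ gives
\[
\varepsilon J(w)=\varepsilon J(w_0)+\varepsilon\bigl(J(w)-J(w_0)\bigr),
\]
and the second piece is $\varepsilon\cdot o(1)=o(|\varepsilon|)$.

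Collecting the coefficients of $\delta$ and $\varepsilon$ yields
\[
X_{1,0}=(I^0)'(w_0)\bigl(y_{1,0}\widehat Y_T^0+y_0\dot Y_T\bigr),\qquad
X_{0,1}=J(w_0)+y_{0,1}\widehat Y_T^0(I^0)'(w_0),
\]
which are the claimed formulas. The convergence holds $\mathbb P$-a.s. because every step was carried out for a fixed $\omega$ in a full-measure set.

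The main obstacle is the mode of convergence in \eqref{defdotY}. The assumption only says it holds "in a topology allowing the differentiations," whereas the conclusion is almost sure. I would read \eqref{defdotY} as including an a.s. pathwise expansion, which Proposition~\ref{prop3} implicitly requires. If only $L^1$ or in-probability convergence were available, I would instead pass to subsequences $(\delta_n,\varepsilon_n)\to0$ along which the convergence is a.s. Alternatively, I would state the result in that weaker mode. The remaining check, that $w_0>0$ a.s. so that $I^0$ and $J$ are differentiable there, is routine.
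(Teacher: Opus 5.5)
Your proof is correct and follows the paper's own argument: expand $y(x,\delta,\varepsilon)\widehat Y_T^\delta$ to first order (the cross terms are $o(|\delta|+|\varepsilon|)$), apply a first-order Taylor expansion of $I^0$ at $y_0\widehat Y_T^0$, and use only continuity of $J$ for the $\varepsilon J$ term. You are also right that the a.s.\ conclusion requires reading \eqref{defdotY} pathwise and that $y_0\widehat Y_T^0\in(0,\infty)$ a.s., both of which the paper uses silently; for positivity, however, finiteness of $v$ is not enough when $V^0(0+)<\infty$ (e.g.\ power utility with $p<0$), so the justification should instead combine optimality of $\widehat Y^0$, $V^{0\prime}(0+)=-\infty$, and comparison with a strictly positive element of $\mathcal Y$.
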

\begin{proof}
In order to expand the optimizer, first, let us observe that 
\[
\widehat X_T(x, \delta,\varepsilon)
=
I^\varepsilon
\left(
y(x, \delta,\varepsilon)\widehat Y_T^\delta
\right).
\]
Next, let us recall that 
\[
I^\varepsilon=I^0+\varepsilon J,
\]
and
\[
y(x, \delta,\varepsilon)\widehat Y_T^\delta
=
y_0\widehat Y_T^0
+
\delta
\left(
y_{1,0}\widehat Y_T^0+y_0\dot Y_T
\right)
+
\varepsilon y_{0,1}\widehat Y_T^0
+
o(|\delta|+|\varepsilon|).
\]
Since \(I^0\) is continuously differentiable and \(J\) is continuous,
\[
I^\varepsilon(z)
=
I^0(z)+\varepsilon J(z),
\]
and a first-order Taylor expansion around
\(z_0=y_0\widehat Y_T^0\) gives
\begin{equation}\nonumber
\hspace{-15mm}
\lim_{|\delta|+|\varepsilon| \to 0}
\frac{
\widehat X_T(x, \delta,\varepsilon)
-
I^0(y_0\widehat Y_T^0)
-
\delta
(I^0)'(y_0\widehat Y_T^0)
\left(
y_{1,0}\widehat Y_T^0+y_0\dot Y_T
\right)
-
\varepsilon
\left(
J(y_0\widehat Y_T^0)
+
y_{0,1}\widehat Y_T^0(I^0)'(y_0\widehat Y_T^0)
\right)
}{
|\delta|+|\varepsilon|
}
=0,\quad a.s.
\end{equation}
which yields the claim.
\end{proof}
\begin{remark}
Unlike the one-parameter setting of Theorems \ref{thm2}--\ref{thm4}, higher-order expansions
for simultaneous perturbations of probabilities and preferences would require
higher-order differentiability, or analyticity, of the map
\[
\delta\mapsto \widehat Y_T^\delta
\]
in a topology compatible with the budget equation, which is a substantially stronger assumption. Consequently, in the present
paper we restrict ourselves to first-order expansions in the two-parameter
framework.
\end{remark}
 
\section{Example: Affine Perturbation of CRRA Inverse Marginal Utilities}\label{secCRRAex}

The perturbations considered in Theorem~1 are formulated directly at the
level of inverse marginal utilities. The purpose of this example is to
illustrate the general theory in the case of power utilities.

Let
\[
p_0<1,
\qquad
p_1<1,
\qquad
p_0\neq 0,
\qquad
p_1\neq 0,
\qquad
p_0\neq p_1.
\]

Consider the inverse marginal utilities
\[
I^{p_0}(y)
=
y^{\frac{1}{p_0-1}},
\qquad
I^{p_1}(y)
=
y^{\frac{1}{p_1-1}},
\qquad y>0.
\]
For $\varepsilon\ge0$, let us set
 
\[
I^\varepsilon
=
I^{p_0}
+
\varepsilon I^{p_1}.
\]

Since both \(I^{p_0}\) and \(I^{p_1}\) are completely monotonic,
\(I^\varepsilon\) is completely monotonic for every \(\varepsilon\ge0\), see also Example \ref{exPower}. The corresponding Bernstein measures belong to $\mathcal M$ by Example \ref{exPower}. In the remainder of this section, the formulas below should be interpreted as right-sided expansions around
\(\varepsilon=0\).

As predicted by Theorem \ref{thm1}, the dual value function admits the affine representation
$$v(y, \varepsilon)=v_0(y)+\varepsilon v_1(y),$$
where $v_
0$
 and $v_1$
 are obtained explicitly from the Bernstein measures. We therefore focus on the corresponding expansions of the Lagrange multiplier, the primal value function, and the optimal terminal wealth.

\subsection{Expansion of the Lagrange multiplier}

The budget equation is
\[
\E\!\left[
\widehat Y_T
I^\varepsilon
\bigl(
y(x, \varepsilon) \widehat Y_T
\bigr)
\right]
=
x.
\]

By Theorem~\ref{thm2},
\[
y(x, \varepsilon)
=
y_0
+
\varepsilon y_1
+
o(\varepsilon),
\]
where \(y_0\) solves
\[
\E\!\left[
\widehat Y_T
I^{p_0}(y_0\widehat Y_T)
\right]
=
x.
\]

Moreover,
\[
y_1
=
-
\frac{
\E\!\left[
\widehat Y_T
I^{p_1}(y_0\widehat Y_T)
\right]
}{
\E\!\left[
\widehat Y^2_T
(I^{p_0})'(y_0\widehat Y_T)
\right]
}.
\]

Since
\[
(I^{p_0})'(z)
=
\frac{1}{p_0-1}
z^{\frac{1}{p_0-1}-1},
\]
and
\[
I^{p_1}(z)
=
z^{\frac{1}{p_1-1}},
\]
we obtain
\[
y_1
=
-
\frac{
\E\!\left[
\widehat Y_T
(y_0\widehat Y_T)^{\frac{1}{p_1-1}}
\right]
}{
\frac{1}{p_0-1}
\E\!\left[
\widehat Y^2_T
(y_0\widehat Y_T)^{\frac{1}{p_0-1}-1}
\right]
}.
\]

Equivalently,
\[
y_1
=
(1-p_0)
\frac{
\E\!\left[
\widehat Y_T
(y_0\widehat Y_T)^{\frac{1}{p_1-1}}
\right]
}{
\E\!\left[
\widehat Y^2_T
(y_0\widehat Y_T)^{\frac{1}{p_0-1}-1}
\right]
}.
\]

\subsection{Expansion of the optimal terminal wealth}

By Theorem~\ref{thm4},
\[
\widehat X_T^\varepsilon
=
\widehat X_T^0
+
\varepsilon X_1
+
o(\varepsilon),
\]
where
\[
\widehat X_T^0
=
I^{p_0}(y_0\widehat Y_T)
=
(y_0\widehat Y_T)^{\frac{1}{p_0-1}},
\]
and
\[
X_1
=
I^{p_1}(y_0\widehat Y_T)
+
y_1\widehat Y_T
(I^{p_0})'(y_0\widehat Y_T).
\]

Substituting the explicit formulas yields
\[
X_1
=
(y_0\widehat Y_T)^{\frac{1}{p_1-1}}
+
\frac{y_1}{p_0-1}
\widehat Y_T
(y_0\widehat Y_T)^{\frac{1}{p_0-1}-1}.
\]


\subsection{Expansion of the primal value function}

By Theorem~\ref{thm3},
\[
u^\varepsilon(x)
=
u^0(x)
+
\varepsilon u_1(x)
+
\frac{\varepsilon^2}{2}u_2(x)
+
o(\varepsilon^2).
\]

The first-order coefficient is
\[
u_1(x)
=
v_1(y_0),
\]
while
\[
u_2(x)
=
-
\frac{
\bigl(
v_1'(y_0)
\bigr)^2
}{
v_0''(y_0)
}.
\]

Therefore, the first two nontrivial coefficients of the expansion are
determined explicitly by the dual quantities \(v_0\) and \(v_1\).



\subsection{Closed-form formulas}

The preceding calculations become considerably more transparent after
introducing the moments of the distinguished dual optimizer
\[
M(\alpha):=\E\!\left[\widehat Y_T^{\,\alpha}\right],\qquad \alpha>0.
\]

 Throughout this subsection we continue to assume
\[
I^{p_0}(y)=y^{\frac1{p_0-1}},
\qquad
J(y)=I^{p_1}(y)=y^{\frac1{p_1-1}},
\]
with
\[
p_0,p_1<1,\qquad p_0,p_1\neq0.
\]

For convenience, let
\[
a:=\frac{p_0}{p_0-1},
\qquad
b:=1+\frac1{p_1-1}.
\]

Then the budget equation becomes
\[
x
=
y_0^{\frac1{p_0-1}}
M(a),
\]
and therefore
\[
y_0
=
\left(
\frac{x}{M(a)}
\right)^{p_0-1}.
\]

Consequently, the optimal terminal wealth of the reference problem is
\[
\widehat X_T
=
(y_0\widehat Y_T)^{\frac1{p_0-1}}
=
\frac{x}{M(a)}
\widehat Y_T^{\frac1{p_0-1}}.
\]

Moreover, the first-order coefficient of the Lagrange multiplier admits the
closed-form representation
\[
y_1
=
(1-p_0)
\,
y_0^{\frac1{p_1-1}-\frac1{p_0-1}+1}
\,
\frac{M(b)}
{M(a)}.
\]

Hence the first-order correction of the optimal terminal wealth is
\[
X_1
=
(y_0\widehat Y_T)^{\frac1{p_1-1}}
+
\frac{y_1}{p_0-1}
\widehat Y_T
(y_0\widehat Y_T)^{\frac1{p_0-1}-1}.
\]

The first-order coefficient of the primal value function is
\[
u_1(x)=v_1(y_0),
\]

while the second-order coefficient simplifies to
\[
u_2(x)
=
-
\frac{\left(v_1'(y_0)\right)^2}
{v_0''(y_0)},
\]
as established in Section~\ref{secExpLowOrderP}.


Thus, the first-order corrections are determined by the moments of the distinguished dual optimizer and the parameters $p_0$ and $p_1$.

\section{A First-Order Extension: Simultaneous Learning of Market Parameters and Preferences}\label{exBS}
In practice, an investor rarely knows either the market model or her own preferences exactly. Market parameters are inferred from historical data and updated over time, while preferences may also evolve as new information becomes available or as the investor's objectives change. This motivates studying the combined effect of learning about the market and changes in investor preferences.  

Motivated by the two-parameter perturbation framework of Section \ref{secJointPert}, we consider a Black–Scholes–Merton model in which both drift estimates and investor preferences vary.
The explicit structure of the model allows the first-order coefficients to be computed in closed form.



Consider a Black--Scholes--Merton market with one riskless asset and one risky
asset. The riskless asset satisfies
\[
dB_t=rB_t\,dt,
\]
where \(r\in\mathbb R\) is constant, while the risky asset satisfies
\[
dS_t=S_t(\mu\,dt+\sigma\,dW_t),
\]
with
\[
\sigma>0.
\]

Suppose that the drift parameter is estimated from data and that, at stage
\(n\), the investor uses an estimate \(\mu_n\), where
\[
\mu_n\to\mu.
\]

At the same time, the investor's preferences are updated and described by
inverse marginal utilities
\begin{equation}\label{671}
I^n(y)
=
I^0(y)+\varepsilon_n I^1(y),
\end{equation}
where
\[
\varepsilon_n\to0.
\]

Thus, two sources of uncertainty are present simultaneously:

\begin{itemize}
\item statistical uncertainty in the estimation of the drift parameter,
represented by
\[
\delta_n:=\mu_n-\mu,
\]

\item uncertainty or learning of preferences, represented by
\[
\varepsilon_n.
\]
\end{itemize}

The corresponding optimal terminal wealth is given by
\[
\widehat X_T^n
=
I^n(y_nH_T^n),
\]
where \(H_T^n\) denotes the state price density corresponding to the model
with drift \(\mu_n\), and \(y_n\) is determined by the budget equation
\[
\E\bigl[H_T^nI^n(y_nH_T^n)\bigr]=x.
\]

%
%
%
%
%

\subsection{State Price Densities}

Define the Sharpe ratios
\[
\theta_n
=
\frac{\mu_n-r}{\sigma},
\qquad
\theta
=
\frac{\mu-r}{\sigma}.
\]

The state price density in model \(n\) is
\[
H_T^n
=
e^{-rT}
\exp \left(
-\theta_nW_T
-\frac12\theta_n^2T
\right).
\]

The limiting state price density is
\[
H_T
=
e^{-rT}
\exp \left(
-\theta W_T
-\frac12\theta^2T
\right).
\]


The corresponding optimal terminal wealth with initial capital \(x>0\) is
given by
\begin{equation}\label{672}
\widehat X_T^n
=
I^n(y_nH_T^n),
\end{equation}
where \(H_T^n\) denotes the state price density corresponding to the model
with drift \(\mu_n\), and \(y_n\) is determined from the budget equation
\[
\mathbb E\left[H_T^nI^n(y_nH_T^n)\right]=x.
\]
Equivalently,
\begin{equation}\label{673}
\mathbb E\left[H_T^n\widehat X_T^n\right]=x,
\end{equation}
so that the sequence of optimization problems corresponds to a fixed
initial wealth \(x\) and varying beliefs and preferences.
In \eqref{672}, the Lagrange multiplier \(y_n\) is determined uniquely by the budget
constraint  \eqref{673}.

\subsection{First-Order Expansion of the State Price Density}

Let
\[
\delta_n=\mu_n-\mu.
\]

Since
\[
\theta_n
=
\theta+\frac{\delta_n}{\sigma},
\]
a Taylor expansion yields
\[
H_T^n
=
H_T
+
\delta_n H_{\mu}
+
o(\delta_n),
\]
where
\[
H_{\mu}
=
\frac{\partial H_T}{\partial \mu}.
\]

Direct differentiation gives
\[
H_{\mu}
=
-\frac{1}{\sigma}
H_T(W_T+\theta T).
\]

\subsection{Preference Perturbations}

In accordance with the framework developed in Sections~1--3, we assume
that the inverse marginal utilities admit the exact affine representation
\[
I^n(y)=I^0(y)+\varepsilon_n I^1(y),
\]
where
\[
\varepsilon_n\to0.
\]

Consequently, the preference perturbation is linear in the parameter
\(\varepsilon_n\), and all nonlinear effects arise through the dependence
of the Lagrange multiplier \(y_n\) on the market and preference
parameters.

We seek an expansion of the form
\[
y_n
=
y+\delta_n y_\mu+\varepsilon_n y_I
+o(|\delta_n|+|\varepsilon_n|),
\]
where \(y\) solves
\[
\E\!\left[H_T I^0(yH_T)\right]=x.
\]

\subsection{Expansion of the Lagrange Multiplier}

Applying the implicit function theorem to the budget equation with respect
to the preference parameter yields
\[
y_I
=
-
\frac{
\E\!\left[H_T I^1(yH_T)\right]
}{
\E\!\left[H_T^2 (I^0)'(yH_T)\right]
}.
\]

Similarly, we get 
\[
y_\mu
=
-
\frac{
\E\!\left[
H_\mu
\Bigl(
I^0(yH_T)+yH_T (I^0)'(yH_T)
\Bigr)
\right]
}{
\E\!\left[
H_T^2 (I^0)'(yH_T)
\right]
}.
\]

\subsection{Expansion of the Optimal Wealth}
The expansion above may be viewed as a two-parameter analogue of
Theorem~\ref{thm4}, with the market perturbation represented by \(\delta_n\)
and the preference perturbation represented by \(\varepsilon_n\).
Denote by
\[
\widehat X_T:=I^0(yH_T)
\]
the optimal terminal wealth corresponding to the limiting model and the
reference preference specification.

Substituting the expansions of \(I^n\), \(H_T^n\), and \(y_n\) into
\[
\widehat X_T^n
=
I^n(y_nH_T^n)
\]
gives

\begin{equation}\label{674}
\widehat X_T^n
=
\widehat X_T
+
\delta_n X_{\mu}
+
\varepsilon_n X_I
+
o(|\delta_n|+|\varepsilon_n|),
\end{equation}
where
\[
\widehat X_T
=
I^0(yH_T),
\]
and
\[
X_I
=I^1(yH_T)
+
y_IH_T(I^0)'(yH_T),
\]
while
\[
X_{\mu}
=
(I^0)'(yH_T)
\Bigl(
y_{\mu}H_T
+
yH_{\mu}
\Bigr).
\]
The coefficients $X_I$
 and $X_\mu$
 represent the sensitivities of the optimal terminal wealth with respect to preference learning and drift learning, respectively.
Equation \eqref{674} separates the first-order effects of preference and market perturbations. We summarize these observations in the following proposition.
 
\begin{proposition}[First-order expansion under simultaneous learning]
Suppose that
\[
H_T^n=H_T+\delta_n H_\mu+o(\delta_n)
\quad\text{a.s.},
\]
and
\[
I^n=I^0+\varepsilon_n I^1,
\qquad
\delta_n\to0,\quad \varepsilon_n\to0.
\]
Assume further that
\[
y_n
=
y+\delta_n y_\mu+\varepsilon_n y_I
+
o(|\delta_n|+|\varepsilon_n|).
\]
Then
\[
\frac{
\widehat X_T^n-\widehat X_T-\delta_nX_\mu-\varepsilon_nX_I
}{
|\delta_n|+|\varepsilon_n|
}
\xrightarrow[n\to\infty]{\text{a.s.}}0.
\]
where
\[
\widehat X_T=I^0(yH_T),
\]
\[
X_I=I^1(yH_T)+y_IH_T(I^0)'(yH_T),
\]
and
\[
X_\mu=(I^0)'(yH_T)(y_\mu H_T+yH_\mu).
\]

\end{proposition}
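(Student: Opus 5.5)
The argument is pathwise: fix an $\omega$ outside a null set and prove a deterministic first-order Taylor statement at that $\omega$, as in Proposition~\ref{prop3}. Take $\omega$ with $H_T(\omega)>0$ and $H_T^n(\omega)=H_T(\omega)+\delta_nH_\mu(\omega)+o(\delta_n)$. Then set
\[
w_n:=y_nH_T^n,\qquad w:=yH_T .
\]
Multiplying the assumed expansions of $y_n$ and $H_T^n$ gives
\[
w_n=w+\delta_n\bigl(y_\mu H_T+yH_\mu\bigr)+\varepsilon_n y_IH_T+r_n,
\]
where $r_n$ collects $\delta_n^2y_\mu H_\mu$, $\delta_n\varepsilon_n y_IH_\mu$, $y_n\,o(\delta_n)$, and $H_T^n$ times the $o(|\delta_n|+|\varepsilon_n|)$ term. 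Since $y_n\to y$ and $H_T^n\to H_T$ are bounded along the sequence at this $\omega$, each of these is $o(|\delta_n|+|\varepsilon_n|)$; the product terms are bounded by $(|\delta_n|+|\varepsilon_n|)^2$ times a constant. In particular, $w_n-w=O(|\delta_n|+|\varepsilon_n|)$.

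Next, split the error using the exact affine structure $I^n=I^0+\varepsilon_nI^1$:
\[
\widehat X_T^n-\widehat X_T-\delta_nX_\mu-\varepsilon_nX_I
=\Bigl[I^0(w_n)-I^0(w)-(I^0)'(w)(w_n-w)\Bigr]+(I^0)'(w)r_n+\varepsilon_n\bigl[I^1(w_n)-I^1(w)\bigr].
\]
This identity is checked directly from the definitions of $X_\mu$ and $X_I$: expanding $(I^0)'(w)(w_n-w)$ and $\varepsilon_n I^1(w)$ reproduces exactly $\delta_nX_\mu+\varepsilon_nX_I$. For the first bracket, $I^0$ is completely monotonic, hence $C^1$ (indeed analytic) on $(0,\infty)$. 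Because $w>0$ and $w_n\to w$, the mean value theorem gives a bound $o(|w_n-w|)=o(|\delta_n|+|\varepsilon_n|)$. The second term is $o(|\delta_n|+|\varepsilon_n|)$ because $(I^0)'(w)$ is a finite constant at the fixed $\omega$. For the last term, $I^1$ is a Laplace transform of a signed measure with $|\nu|$-integrable exponentials, so it is continuous on $(0,\infty)$. Hence $I^1(w_n)-I^1(w)\to0$, and $\varepsilon_n[\cdots]=o(|\varepsilon_n|)$.

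Dividing by $|\delta_n|+|\varepsilon_n|$ and letting $n\to\infty$ gives the claim on a full-measure set. The only delicate point is that the $o(\cdot)$ remainders in the hypotheses are pathwise. For $y_n$ this is a deterministic expansion. For $H_T^n$ it holds a.s., and in the Black--Scholes--Merton setting it follows from smoothness of $\mu\mapsto H_T$ for each fixed $W_T(\omega)$. Positivity of $w$ ensures that we remain in the domain where $I^0$ and $I^1$ are regular.
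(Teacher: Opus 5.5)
Your proof is correct and follows essentially the same route as the paper: a pathwise expansion of $y_nH_T^n$, a first-order Taylor expansion of $I^0$ around $yH_T$ using that $I^0$ is $C^1$, and continuity of $I^1$ (which the paper uses only implicitly) to absorb $\varepsilon_n\bigl[I^1(y_nH_T^n)-I^1(yH_T)\bigr]$ into the remainder. Your three-term error identity and your tracking of the cross terms in $r_n$ spell out steps that the paper states in compressed form.
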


\begin{proof}
We have
\[
\widehat X_T^n=I^n(y_nH_T^n).
\]
Using the assumed expansions of \(y_n\) and \(H_T^n\),
\[
y_nH_T^n
=
yH_T
+
\delta_n(y_\mu H_T+yH_\mu)
+
\varepsilon_n y_IH_T
+
o(|\delta_n|+|\varepsilon_n|)
\quad\text{a.s.}
\]
Since \(I^0\) is continuously differentiable,
\[
I^0(y_nH_T^n)
=
I^0(yH_T)
+
(I^0)'(yH_T)
\left[
\delta_n(y_\mu H_T+yH_\mu)
+
\varepsilon_n y_IH_T
\right]
+
o(|\delta_n|+|\varepsilon_n|)
\quad\text{a.s.}
\]
Finally, using \(I^n=I^0+\varepsilon_n I_1\), we obtain
\[
\widehat X_T^n
=
I^0(y_nH_T^n)
+
\varepsilon_n I^1(y_nH_T^n).
\]
Since \(y_nH_T^n\to yH_T\) a.s.,
\[
\varepsilon_n I^1(y_nH_T^n)
=
\varepsilon_n I^1(yH_T)
+
o(|\varepsilon_n|)
\quad\text{a.s.}
\]
Combining the preceding identities gives the desired expansion.
\end{proof}

 
Thus, in the complete Black–Scholes–Merton setting, simultaneous perturbations of market parameters and preferences yield explicit first-order sensitivities.

\appendix

\section{Appendix: Bell Polynomials and Fa\`a di Bruno's Formula}\label{sec:Appendix}
We recall the partial Bell polynomials and the Fa\`a di Bruno formula used in Theorems \ref{thm2}--\ref{thm4}.
For integers \(n\geq 0\) and \(0\leq k\leq n\), the partial Bell polynomial
\(B_{n,k}\) is defined by
\[
B_{n,k}(x_1,\ldots,x_{n-k+1})
=
\sum
\frac{n!}
{j_1!\cdots j_{n-k+1}!}
\prod_{m=1}^{n-k+1}
\left(
\frac{x_m}{m!}
\right)^{j_m},
\]
where the summation is over all nonnegative integers
\[
j_1,\ldots,j_{n-k+1}
\]
satisfying
\[
j_1+\cdots+j_{n-k+1}=k,
\]
and
\[
j_1+2j_2+\cdots+(n-k+1)j_{n-k+1}=n.
\]

We use the conventions
\[
B_{0,0}=1,
\qquad
B_{n,0}=0,
\qquad n\geq1.
\]

The first few Bell polynomials are
\[
B_{1,1}(x_1)=x_1,
\]
\[
B_{2,1}(x_1,x_2)=x_2,
\qquad
B_{2,2}(x_1)=x_1^2,
\]
\[
B_{3,1}(x_1,x_2,x_3)=x_3,
\]
\[
B_{3,2}(x_1,x_2)=3x_1x_2,
\]
\[
B_{3,3}(x_1)=x_1^3.
\]

The Bell polynomials arise naturally in the Fa\`a di Bruno formula.
If \(f\) and \(g\) are sufficiently smooth, then
\[
\frac{d^n}{d\varepsilon^n}
f(g(\varepsilon))
=
\sum_{k=1}^{n}
f^{(k)}(g(\varepsilon))
B_{n,k}
\bigl(
g'(\varepsilon),
\ldots,
g^{(n-k+1)}(\varepsilon)
\bigr).
\]

In particular, if
\[
g(\varepsilon)
=
\sum_{m=0}^{\infty}
\frac{\varepsilon^m}{m!}g_m,
\]
then evaluating at \(\varepsilon=0\) yields
\[
\left.
\frac{d^n}{d\varepsilon^n}
f(g(\varepsilon))
\right|_{\varepsilon=0}
=
\sum_{k=1}^{n}
f^{(k)}(g_0)
B_{n,k}
(g_1,\ldots,g_{n-k+1}).
\]

This identity is repeatedly used in the proofs of Theorems~\ref{thm2}--\ref{thm4},
where
\[
g(\varepsilon)=y(x,\varepsilon)\]
and
\[
f=v_0',\quad
f=v_1',\quad
f=v_0,\quad
f=v_1,\quad
f=I,\quad
f=J.
\]
 
\bibliographystyle{alpha}
\bibliography{finance}

\end{document}